\documentclass[runningheads,a4paper]{llncs}
\usepackage{amsmath}
\usepackage{amssymb}
\usepackage{amsfonts}
\usepackage{subfigure}
\usepackage{graphicx}
\usepackage[british]{babel}

\usepackage{setspace}
\setlength{\textwidth}{6.8in}
\setlength{\oddsidemargin}{-0.25in}
\setlength{\evensidemargin}{-0.25in}
\setlength{\textheight}{10.5in}
\setlength{\topmargin}{-0.8in}

\newcommand{\cG}{{\mathcal G}}
\newcommand{\cX}{{\mathcal X}}

\newcommand{\ba}{\backslash}

\doublespacing

\begin{document}

\title{Quantifying the Extent of Lateral Gene Transfer Required to Avert a `Genome of Eden'}
\titlerunning{Quantifying LGT}
\author{Leo van Iersel, Charles Semple and Mike Steel \thanks{We thank the Allan Wilson Centre for Molecular Ecology and Evolution, and the New Zealand Marsden Fund for helping fund this work.}
\institute{Department of Mathematics and Statistics, University of Canterbury\\Private Bag 4800, Christchurch, New Zealand}\\
}

\maketitle

\begin{abstract}
The complex pattern of presence and absence of many genes across different species provides tantalising clues as to how genes evolved through the processes of gene genesis, gene loss and lateral gene transfer (LGT).  The extent of LGT, particularly in prokaryotes, and its implications for creating a `network of life' rather than a `tree of life' is controversial.
In this paper, we
formally model the problem of quantifying LGT, and provide exact mathematical bounds, and new computational results.  In particular, we investigate the computational complexity of quantifying the extent of
LGT under the simple models of gene genesis, loss and transfer on which a recent heuristic analysis of biological data relied.
Our approach takes advantage of a relationship between LGT optimization and graph-theoretical concepts such as tree width and network flow.\\
\end{abstract}

{\bf Keywords:} tree, phylogenetic network, lateral gene transfer, tree-width

{\bf Email:} l.j.j.v.iersel@gmail.com, c.semple@math.canterbury.ac.nz, m.steel@math.canterbury.ac.nz
\newpage

\section{INTRODUCTION}

Modern sequencing technology is providing an increasingly detailed picture of the distribution of genes across a wide array of taxa.
Some molecular biologists have used these data to argue that unless ancestral genomes were considerably larger than present-day ones,
extensive lateral gene transfer (LGT) must be invoked to explain the current distribution of genes \cite{dag07}, \cite{dag08}, \cite{mir03}.
LGT is a process by which a gene (or genes) from one species is transferred into the genotype of another species by various genetic mechanisms.  The extent of LGT is controversial, but it has been argued to be widespread in prokaryotes (e.g. bacteria) and during the earlier epochs of evolution, suggesting in turn that a network, rather than a tree, best describes the evolution of life~\cite{doo07}.

Although the pattern of  presence and absence of different genes across a set of species can suggest that LGT events occurred in the evolution of these species, another explanation is that certain genes are simply lost in different lineages. As a result, various attempts to quantify the extent of LGT based on gene content have been developed, typically based either on most-parsimonious scenarios or on stochastic models of gene genesis, loss and transfer (see, for example,  \cite{dag07}, \cite{jin07}, \cite{spe06}).   Attempts  to reconstruct evolutionary histories under the assumption that no LGT  events have occurred (and that genes arise just once) imply that some common ancestors of the considered species must have had far  more genes than their current-day descendants.  Doolittle {\em et al.} \cite{doo03} refer to such an unlikely all-encompassing ancestral genome as the `genome of Eden' hypothesis.   Allowing LGT events reduces the need for genes to be present at earlier species, as illustrated for a single gene in Fig.~\ref{intro}.

\begin{figure}[ht]\centering
\resizebox{10.0cm}{!}{
\includegraphics[width=\textwidth]{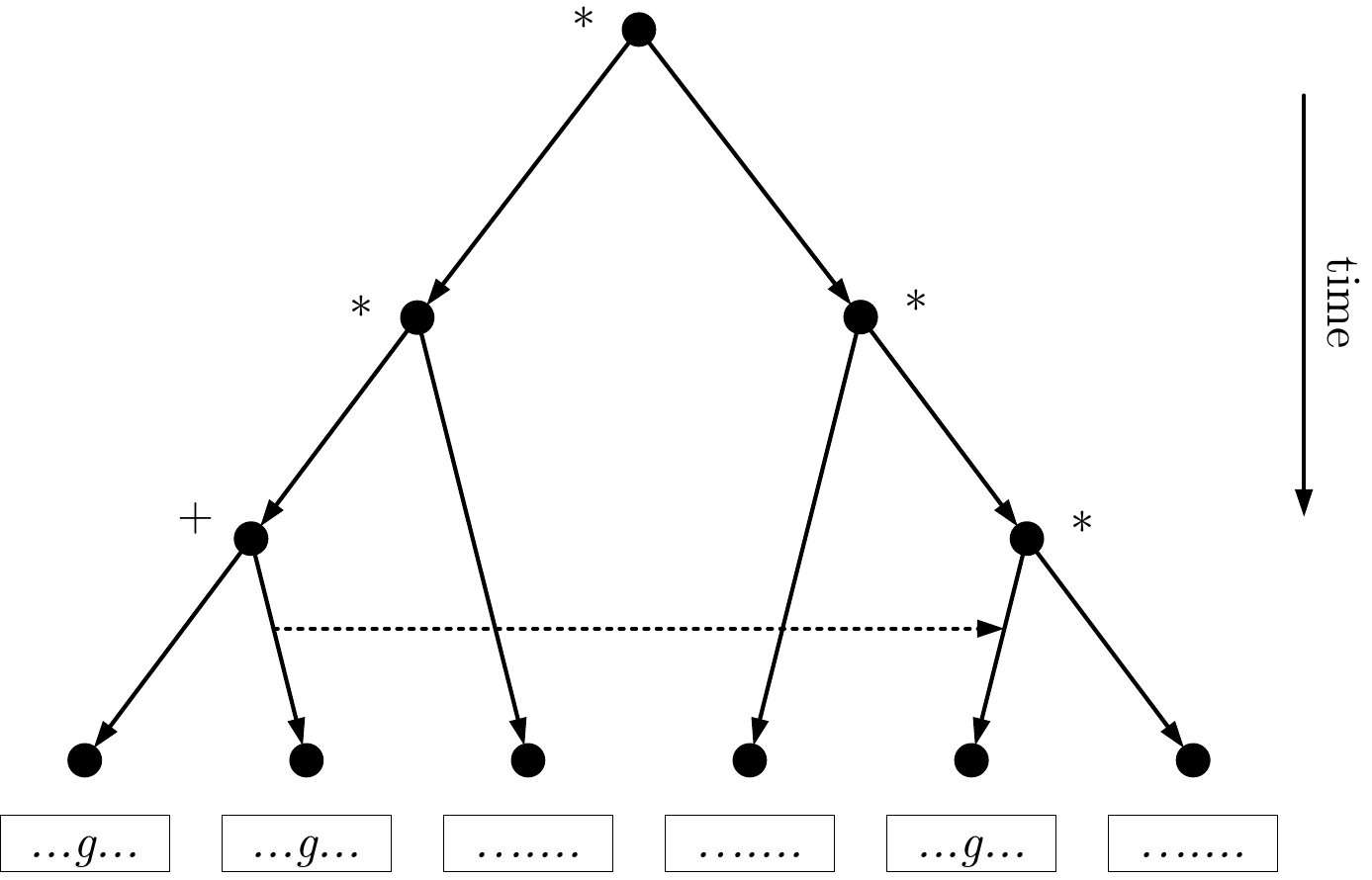}
}
\caption{ The dilemma of ancestral genome inflation: If gene $g$, distributed as shown, is not  transferred laterally  then under the model, $g$ must be in five ancestral genomes (*,+) not just at +. }
\label{intro}
\end{figure}

In this paper, we exploit the combinatorial  structure that underlies a key biological insight on which a recent heuristics analysis of data was based by \cite{dag07} (see also \cite{dag08}, \cite{mir03}). This insight is that simple models of gene evolution, in which a gene typically arises just once (gene genesis) but can be lost multiple times, imply lower bounds on the extent of LGT simply to prevent hypothetical ancestral genomes from becoming unfeasibly large.  For such a model, we aim to bound the number of gene transfer events that have occurred in the evolution of a set of taxa, based on the presence/absence patterns of genes in each of these taxa, assuming that ancestral genomes are bounded by a given size.

Notice that we wish to count transfer events  (rather than the total number of genes that are transferred), since in each transfer event,  several genes may be transferred from one species into another. Thus our count of LGTs is conservative, and recognizes that genes are not independently transferred and that a transfer event may insert a section of the genome (with several genes) into an individual organism of a  different species.

The structure of this paper is as follows. In the next section, we define the model of gene genesis, loss and transfer  precisely, and summarize our main results. We then provide proofs of these results in subsequent sections, and end with some concluding comments and a conjecture.

\section{MATHEMATICAL MODEL AND SUMMARY OF MAIN RESULTS}

\subsection{Definitions and model specification}

We begin by recalling some notation concerning digraphs, and phylogenetic trees and networks.

Let $v$ be a vertex of a digraph $D$. The {\em indegree} of $v$ is the number of arcs directed into
$v$, while the {\em outdegree} of $v$ is the number arcs directed out of $v$. The indegree of $v$ is denoted by $d^-(v)$ and the outdegree of $v$ is denoted by $d^+(v)$. The \emph{degree} of~$v$ is~$d^-(v)+d^+(v)$. Furthermore, $u$ is
an {\em in-neighbour} of $v$ if $(u,v)$ is an arc in $D$, while $w$ is an {\em out-neighbour}
of $v$ if $(v,w)$ is an arc in $D$. A digraph $D$ is {\em rooted} if there exists a vertex, $\rho$ say, of indegree zero such that, for each vertex $v$ in $D$, there exists a directed path from $\rho$ to $v$.

Throughout the paper, $\cX$ will denote a finite set of taxa and $\cG$ will denote a finite set of genes. A {\em phylogenetic tree (on $\cX$)} is a rooted tree whose root has degree at least two and all other internal vertices have degree at least three, and whose leaf set is
$\cX$. More generally, a
{\em phylogenetic network $N$ (on $\cX$)} is a rooted acyclic digraph with the following properties:
\begin{itemize}
\item[(i)] the root has outdegree at least two and, for all vertices $v$ with $d^+(v)=1$, we have
$d^-(v)\ge 2$; and

\item[(ii)] the set of vertices of outdegree zero is $\cX$.
\end{itemize}
The elements of $\cX$ are the {\em leaves} of $N$. For a subset $U$ of the vertex set of $N$, the sub-digraph of $N=(V,A)$ {\em induced by $U$} is the digraph whose vertex set is $U$, and whose arc set is the subset $\{(u,v): \mbox{$u,v\in U$ and $(u,v)\in A$}\}$ of $A$.

We now describe the model of gene genesis, loss, and transfer. For each taxon~$x\in\cX$, assume that the subset $G(x)$ of $\cG$ consisting of the genes in $\cG$ that have been observed in taxon $x$ is known.
We refer to the associated map $G: \cX \rightarrow 2^\cG$ as a {\em genome assignment}.  Let~$N=(V,A)$ be a phylogenetic network on $\cX$. For a fixed positive integer $k$, and a genome assignment
$G:  \cX \rightarrow 2^\cG$, a {\em $(G,k)$-gene labelling of $N$} is a mapping $F:V\rightarrow 2^{\cG}$ such that the following hold:
\begin{itemize}
\item[(I)] $F(x)=G(x)$ for each $x\in\cX$;
\item[(II)] $|F(v)|\le k$ for all $v\in V$;
\item[(III)] For each gene $g\in\cG$, the sub-digraph of $N$ induced by $\{v\in V : g\in F(v)\}$ is rooted (and therefore connected).
\end{itemize}
Note that if $x\in \cX$ and $|G(x)|>k$, then $N$ has no $(G,k)$-labelling. If $N$ has a $(G,k)$-labelling, we say that $N$ {\em exhibits} such a labelling. A gene labelling describes a possible evolution of the genes observed in the
taxa under consideration.  Property (I) says that each leaf of the network is labelled by the set of genes observed in the corresponding taxon. Property (II) demands that each vertex is labelled by a set of at most~$k$ genes;  the parameter $k$ thus bounds the sizes of the ancestral genomes. Lastly, (III), means that each gene in $\cG$ is created once at most. There is no restriction on the number of times a gene is lost.

Any function $F$ which satisfies properties $(I)$ and $(III)$ we will call a {\em $G$-gene labelling}. With these definitions in hand we can now state the main results of this paper.

\subsection{Bounding the number of gene transfers required}

Our first result establishes lower and upper bounds on the number of LGT events required to explain a given data set. Suppose our input is given by a rooted phylogenetic tree $T$ on $\cX$ (``species tree''), a genome assignment $G: \cX \rightarrow 2^\cG$, and a positive integer $k$. Given a phylogenetic network~$N$, we say that~$N$ can be obtained from~$T$ by \emph{adding}~$h$ \emph{arcs}, if there is a subgraph~$T'$ of~$N$ that is a subdivision of~$T$ (i.e.~$T'$ can be obtained from~$T$ by replacing arcs by directed paths) and~$h$ arcs of~$N$ are not arcs of~$T'$. Here, one views these added arcs as LGT events.

We are interested in the minimum number of LGT events that must be added to $T$ in order for the resulting network to exhibit a $(G,k)$-gene labelling. We denote this minimum number by $\ell(T,G,k)$. Given the above input, Theorem~\ref{prop:LB} provides lower and upper bounds for $\ell(T,G,k)$. For a vertex $v$ of $T$, let $n(v)$ denote the number of genes $g\in\cG$ for which there exist two leaves $x_1,x_2\in\cX$ such that $g\in G(x_1)$, $g\in G(x_2)$ and the most recent common ancestor of~$x_1$ and~$x_2$ in~$T$ is~$v$.

\begin{theorem}\label{prop:LB}
Let $T=(V,E)$ be a rooted phylogenetic tree on $\mathcal{X}$, let $\cG$ be a set of genes, let $G: \cX \rightarrow 2^\cG$ be a genome assignment,
and let $k$ be a positive integer. Then:
\begin{itemize}
\item[{\rm (i)}] $\ell(T,G,k) \geq \sqrt{\frac{2}{3}|\{v\in V : n(v)>k\}|}$.

\item[{\rm (ii)}] $\ell(T,G,k) \leq \Big\lceil\frac{|\mathcal{G}|-k}{k}\Big\rceil \cdot (|\mathcal{X}|+1)$.
\end{itemize}
\end{theorem}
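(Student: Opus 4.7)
For part (i), the plan is to injectively associate each vertex $v$ with $n(v)>k$ to an unordered pair of endpoints of LGT arcs, and then count such pairs. Fix a network $N$ obtained from $T$ by adding $h=\ell(T,G,k)$ LGT arcs and exhibiting a $(G,k)$-gene labelling $F$. For each bad vertex $v$, since $|F(v)|\le k<n(v)$, at least one gene $g$ with $\mrca_T(x_1,x_2)=v$ for two leaves $x_1,x_2$ carrying $g$ satisfies $g\notin F(v)$. Letting $r_g$ be the root of the support of $g$ in $N$, the support-paths from $r_g$ to $x_1$ and $x_2$ cannot visit $v$. I would case-split on whether $r_g$ lies in a child subtree of $v$ (case~A), is a proper ancestor of $v$, or is incomparable to $v$ in $T$. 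In case~A, the first LGT arc on the support-path from $r_g$ to the leaf in the other child subtree has its tail $a_v$ in $r_g$'s subtree and the last LGT arc on that path has its head $b_v$ in the other subtree. In the remaining cases, the last LGT arcs on the two support-paths have heads $a_v,b_v$ in distinct child subtrees of $v$. In every case $v=\mrca_T(a_v,b_v)$, so $v\mapsto\{a_v,b_v\}$ is injective, the pair consists of two distinct LGT endpoints, and at least one of them is a head.

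To finish part (i), I would bound the number of such pairs. With $h$ LGT arcs there are at most $h$ distinct tails and $h$ distinct heads, so the number of unordered pairs of distinct LGT endpoints containing at least one head is at most $\binom{h}{2}+h^2=(3h^2-h)/2\le\tfrac{3}{2}h^2$ (head-head pairs contribute $\binom{h}{2}$ and tail-head pairs at most $h\cdot h$). Hence $|\{v:n(v)>k\}|\le\tfrac{3}{2}h^2$, which rearranges to $\ell(T,G,k)\ge\sqrt{\tfrac{2}{3}|\{v:n(v)>k\}|}$.

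For part (ii), the plan is an explicit construction. Partition $\cG$ into groups $G_1,\ldots,G_t$ of size at most $k$ each, with $t=\lceil|\cG|/k\rceil$, so $t-1=\lceil(|\cG|-k)/k\rceil$. Pick any arc $(u,v)$ of $T$ and subdivide it $t-1$ times in $T'$, giving the path $u\to s_2\to\cdots\to s_t\to v$. Form $N$ by adding one ``bypass'' arc $u\to v$ together with arcs $s_i\to x$ for every $i\in\{2,\ldots,t\}$ and every $x\in\cX$; this totals $1+(t-1)|\cX|\le(t-1)(|\cX|+1)$ added arcs. Define $F$ by $F(x)=G(x)$ for leaves, $F(s_i)=G_i$ for $i\ge 2$, and for each other vertex $w$ put $F(w)=\{g\in G_1: w \text{ lies on some } \rho\text{-to-leaf path that uses the bypass arc in place of the subdivided path whenever possible}\}$. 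Verifying (I)--(III): the bypass prevents any $G_1$-gene from being assigned to an $s_i$, so $|F(s_i)|=|G_i|\le k$; every other internal vertex holds a subset of $G_1$, of size $\le k$; each $g\in G_i$ ($i\ge 2$) has support $\{s_i\}\cup\{x\in\cX:g\in G(x)\}$, rooted at $s_i$ via the LGT arcs $s_i\to x$.

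The main obstacle I anticipate is the case analysis in part (i): showing in each case that the support-path from $r_g$ to a leaf in the ``wrong'' child subtree of $v$ yields an LGT endpoint of the specified type in the prescribed subtree. This uses the fact that $T'$-paths never cross between distinct child subtrees of $v$ without visiting $v$, so a support-path that avoids $v$ must traverse an LGT arc whose head lies strictly below $v$ (when entering the destination subtree) or whose tail lies strictly below $v$ (when leaving the source subtree). For part (ii), the subtle point is that a single bypass arc per group suffices (instead of $t-1$ separate bypass arcs), which is what brings the construction within the claimed bound.
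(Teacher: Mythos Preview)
Your proposal is correct and, for part~(i), follows essentially the same idea as the paper: for each vertex $v$ with $n(v)>k$ you locate two LGT-arc endpoints whose lowest common ancestor in $T'$ is~$v$, and then count.  The paper phrases this via a ``satisfies'' relation on single lgt-arcs and pairs of lgt-arcs (a single arc satisfies at most one vertex, a pair at most three), while you go directly to an injective map $v\mapsto\{a_v,b_v\}$ into unordered endpoint pairs containing at least one head.  The two counts coincide exactly: $3\binom{h}{2}+h=\binom{h}{2}+h^2=\tfrac{3h^2-h}{2}$.  Your case split on the position of $r_g$ is a bit different from the paper's split on the shape of the undirected $x_1$--$x_2$ path (Fig.~2 in the paper), but it yields the same information.

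For part~(ii) your construction differs from the paper's.  The paper introduces $f=\lceil(|\cG|-k)/k\rceil$ fresh vertices $v_1,\ldots,v_f$, labels them by the non-$G^0$ gene blocks, adds an arc from the root to each $v_i$ and from each $v_i$ to the relevant leaves, and sets $F\equiv G^0$ on the internal vertices of~$T$.  You instead subdivide one arc of~$T$ to create the carriers $s_2,\ldots,s_t$ and then need an extra bypass arc $u\to v$ so that the $G_1$-support remains connected while avoiding the $s_i$'s.  Both constructions give the same arc count and the same verification of~(I)--(III); yours has the mild advantage that every added vertex already lies on the subdivision~$T'$, at the cost of the bypass arc.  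One wording point: your formula for $F(w)$ on the remaining internal vertices does not actually depend on~$g$; what you mean (and what works) is simply $F(w)=G_1$ for every non-$s_i$ internal vertex, with the bypass arc ensuring the induced subgraph on $\{w:g\in F(w)\}$ stays rooted for each $g\in G_1$.
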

The proof of Theorem~\ref{prop:LB} is given in Section~\ref{boundproof}.

\subsection{Hardness results}

The next two results show that two fundamental decision questions concerning the existence of
$(G, k$)-labellings are NP-complete. First, consider the following problem:

\noindent\begin{tabular}{lp{0.85\textwidth}}
\multicolumn{2}{l}{\textsc{Gene Labelling}} \\
\textit{Given:} & A phylogenetic network $N$ on $\cX$, a finite set $\mathcal G$ of genes, a genome assignment $G:\cX \rightarrow 2^\cG$, and a positive integer $k$. \\
\textit{Question:} & Does $N$ exhibit a $(G,k)$-labelling? \\
\end{tabular}

\begin{theorem}
\label{prop:homeomorphism}
The decision problem {\sc Gene Labelling} is NP-complete even if $k=1$.
\end{theorem}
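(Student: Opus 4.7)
Gene Labelling is clearly in NP: a proposed labelling $F:V\to 2^{\cG}$ can be verified against (I) and (II) vertex by vertex, and against (III) gene by gene by checking that, for each $g\in\cG$, the sub-digraph induced by $\{v:g\in F(v)\}$ contains a vertex of in-degree zero (within that sub-digraph) from which all other vertices in the sub-digraph are reachable; all of these checks are polynomial.

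For NP-hardness at $k=1$, my plan is to reduce from 3-SAT. The key structural observation to exploit is that when $k=1$, every leaf must satisfy $|G(x)|\leq 1$ and every internal vertex is labelled by at most one gene, so for each gene $g$ the set $S_g=\{v\in V:g\in F(v)\}$ contains $L_g=\{x\in\cX:g\in G(x)\}$, induces a rooted sub-digraph of $N$, and the sets $S_g$ are pairwise disjoint on internal vertices. A $(G,1)$-labelling therefore corresponds exactly to a packing of vertex-disjoint rooted sub-digraphs (one per gene), each spanning its prescribed leaf set. This rigid packing requirement is the lever for encoding Boolean logic.

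Given a 3-SAT instance $\phi$ with variables $x_1,\ldots,x_n$ and clauses $C_1,\ldots,C_m$, I would build $N$, $\cG$, and $G$ out of two kinds of gadgets. A \emph{variable gadget} for $x_i$ consists of two parallel ``strands'' joined at a reticulation, together with a dedicated gene $g_i$ whose leaves are placed so that its rooted sub-digraph is forced to occupy exactly one of the two strands; occupation of the left (resp.\ right) strand is interpreted as $x_i=\mathrm{true}$ (resp.\ $x_i=\mathrm{false}$). A \emph{clause gadget} for $C_j=\ell_1\vee\ell_2\vee\ell_3$ introduces a dedicated gene $g_j^{C}$ whose leaves can only be connected by a rooted sub-digraph that passes through one of three designated vertices, each of which lies on the strand of the variable gadget that is \emph{left free} precisely when the corresponding literal $\ell_r$ is satisfied. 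Thus a satisfying assignment of $\phi$ translates into a valid $(G,1)$-labelling, and conversely any $(G,1)$-labelling reads off a satisfying assignment.

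The main obstacle will be engineering the gadgets so that three constraints hold simultaneously: (i) $N$ genuinely meets the definition of a phylogenetic network (acyclicity, root out-degree at least two, and every vertex of out-degree one having in-degree at least two), (ii) the variable gadget admits no ``degenerate'' labellings of $g_i$ that bypass the intended binary choice, and (iii) no parasitic interactions among variable genes and clause genes allow a $(G,1)$-labelling when $\phi$ is unsatisfiable. Once the gadgets are nailed down, the polynomial bound on the reduction and the equivalence between satisfying assignments and valid labellings follow from a routine case analysis.
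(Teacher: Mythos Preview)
Your NP membership argument is fine, and your structural observation---that a $(G,1)$-labelling is a packing of pairwise vertex-disjoint rooted sub-digraphs, one per gene, each spanning its prescribed leaf set---is exactly the right lens on the problem.

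However, what you have written is a plan, not a proof. The entire content of the hardness argument is the gadget construction, which you explicitly defer. Your own list of obstacles (keeping $N$ acyclic with the required degree constraints, ruling out degenerate routings of the variable gene, preventing parasitic interactions between variable and clause genes) is precisely where such reductions succeed or fail, and none of it is carried out. Until concrete gadgets are specified and verified, there is no proof here.

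The paper bypasses gadget engineering altogether by recognizing that your ``vertex-disjoint rooted sub-digraphs'' picture is already a disjoint-paths statement. It reduces from \textsc{Directed Acyclic Subgraph Homeomorphism} (DASH): after a preprocessing step that makes every vertex of the pattern $P$ have total degree one (so $P$ is just a set of arcs), each pattern arc $a=(u,v)$ becomes a gene $g_a$ with exactly two leaves $\ell_u,\ell_v$ hung off $u$ and $v$ in the host DAG, and incoming arcs of $u$ are deleted so that $u$ must serve as the root for $g_a$. A $(G,1)$-labelling of the resulting network then exists iff $D$ contains the required internally vertex-disjoint $u$--$v$ paths. Since DASH is already known to be NP-hard (Even--Itai--Shamir; Fortune--Hopcroft--Wyllie), nothing further is needed. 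Your 3-SAT route could in principle be completed, but doing so would essentially re-derive the hardness of disjoint paths inside the reduction; the DASH approach is both shorter and more transparent, and it flows directly from the structural observation you already made.
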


A related problem, but concerning rooted phylogenetic trees, is the following:

\noindent\begin{tabular}{lp{0.85\textwidth}}
\multicolumn{2}{l}{\textsc{$(G,k)$-Tree}} \\
\textit{Given:} & A finite set $\mathcal X$ of taxa, a finite set $\mathcal G$ of genes,  a genome assignment $G:\cX \rightarrow 2^\cG$, and a positive integer $k$. \\
\textit{Question:} & Does there exist a rooted phylogenetic tree $N$ on $\cX$ that exhibits a
$(G,k)$-labelling?  \\
\end{tabular}

\begin{theorem}
\label{Gktree}
The decision problem {\sc $(G,k)$-Tree} is NP-complete.
\end{theorem}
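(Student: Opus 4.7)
Membership in NP is straightforward: a certificate is a rooted phylogenetic tree $N$ on $\cX$ together with a map $F:V(N)\to 2^{\cG}$, and conditions (I)--(III) are each checkable in polynomial time.

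For the hardness direction, I would first record a useful reformulation. For any fixed rooted phylogenetic tree $T$ on $\cX$, let $T_g$ denote the minimal subtree of $T$ that spans $X_g:=\{x\in\cX : g\in G(x)\}$. Then the mapping $F(v):=\{g\in\cG : v\in T_g\}$ is a $G$-gene labelling of $T$, since $T_g\cap\cX = X_g$ (so $F$ extends $G$) and each set $\{v : g\in F(v)\}=T_g$ is connected. Moreover it is pointwise minimal: any $G$-gene labelling $F'$ must have $\{v : g\in F'(v)\}$ connected and containing $X_g$, hence containing $T_g$, so $F'(v)\supseteq F(v)$. Consequently $T$ admits a $(G,k)$-labelling if and only if every vertex of $T$ is contained in at most $k$ of the Steiner trees $\{T_g\}_{g\in\cG}$. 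The decision problem thus becomes: does there exist a rooted phylogenetic tree on $\cX$ in which no vertex is crossed by more than $k$ of the gene-spans $X_g$?

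I would then exhibit a polynomial-time reduction from a suitable NP-hard combinatorial problem; natural candidates are a multi-state variant of perfect phylogeny or a partitioning/colouring-type problem such as \textsc{Hypergraph 2-Colouring} or \textsc{3-Partition}. The construction would use two kinds of genes. A large family of \emph{skeleton} genes, each shared by a carefully chosen subset of taxa, is designed so that the Steiner trees of the skeleton genes already load every internal vertex close to the bound $k$; this should restrict the feasible trees to a family of macro-topologies in which specified subsets of taxa are forced to form clusters. A smaller family of \emph{constraint} genes then encodes the discrete choices of the source problem (state assignments, colour classes, or partition blocks) by contributing the last units of load at specific vertices, load which can be avoided only when the source instance is a ``yes'' instance.

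The principal obstacle is establishing \emph{rigidity}: because every phylogenetic tree on $\cX$ is a priori a candidate, one must show that no unintended topology satisfies the $k$-bound. This typically requires a counting argument reminiscent of part~(i) of Theorem~\ref{prop:LB}, but applied locally inside the tailored instance, showing that any deviation from the intended cluster structure overloads some internal vertex. Once rigidity is in hand, the remaining correspondence between ``yes'' instances of the source problem and $(G,k)$-\textsc{Tree} ``yes'' instances reduces to a routine case analysis on how the constraint genes interact with the allowed tree topologies.
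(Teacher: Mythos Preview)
Your proposal is not yet a proof: the NP membership argument and the Steiner-tree reformulation are correct, but the hardness direction remains a programme (``I would then exhibit a reduction\ldots''), with the decisive step---the rigidity argument---explicitly left open. Gadget constructions of the skeleton/constraint kind you describe are notoriously delicate precisely because, as you say, \emph{every} phylogenetic tree on $\cX$ is a candidate; without the rigidity lemma the reduction is not established.

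More importantly, you are one observation away from the paper's clean argument, and your own reformulation points straight at it. You have shown that a tree $T$ on $\cX$ admits a $(G,k)$-labelling iff the minimal labelling $F(v)=\{g:v\in T_g\}$ has $|F(v)|\le k$ everywhere. Now read conditions (I)--(III) again: condition (III) says that for each $g$ the set $\{v:g\in F(v)\}$ is a subtree, and (II) bounds bag sizes. This is exactly the axiomatics of a \emph{tree decomposition}, with genes playing the role of the graph's vertices and the bags being the sets $F(v)$. Condition (I), that each leaf $x$ carries $G(x)$, is what forces the genes in $G(x)$ to occur together in some bag---i.e.\ it encodes the edge-containment axiom. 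The paper exploits this by reducing from \textsc{Treewidth}: given $H=(V_H,E_H)$ and $k'$, set $\cX=E_H$, $\cG=V_H$, $G(\{u,v\})=\{u,v\}$, and $k=k'+1$. A phylogenetic tree on $\cX$ with a $(G,k)$-labelling is then literally a tree decomposition of $H$ of width at most $k'$ (after forgetting the root), and conversely any such decomposition yields a labelled tree by hanging each edge-leaf below a bag containing both endpoints. No skeleton genes, no rigidity argument, no case analysis.

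So the gap is not a technical slip but a missed identification: rather than building gadgets on top of your Steiner-tree picture, recognise that picture as tree decomposition and import the NP-hardness of \textsc{Treewidth} directly.
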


The proofs of these two theorems are established Section~\ref{unravel}.

\subsection{Algorithms}

Despite the apparent intractability of the two problems described above, there are instances for which there exist polynomial-time algorithms. Several such instances are described in Section~\ref{easysec}. One in particular is given next.

Let $N$ be a phylogenetic network on $\cX$. A sequence of vertices and arcs is an {\em underlying cycle} of $N$ if it is a cycle of the underlying graph (i.e the undirected graph obtained by ignoring the directions of the arcs). A phylogenetic network $N$ on $\cX$ is a {\em galled tree} if, for each pair $C$ and $D$ of underlying cycles, the vertex sets of $C$ and $D$ are disjoint. Each such cycle is called a {\em gall}. Theorem~\ref{prop:level1} shows that restricting the phylogenetic networks in {\sc Gene Labelling} to galled trees, the  decision problem becomes polynomial-time solvable.

\begin{theorem}\label{prop:level1}
Let $N$ be a galled tree on $\cX$, let $\cG$ be a set of genes, let $G: \cX \rightarrow 2^\cG$ be a genome assignment,  and let $k$ be a positive integer. Then there is a polynomial-time algorithm for deciding whether or not $N$ exhibits a $(G,k)$-gene labelling.
\end{theorem}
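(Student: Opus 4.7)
The plan is to exploit the defining property of a galled tree: distinct underlying cycles (galls) are vertex-disjoint, so the ``routing'' decisions at each reticulation are local to one gall. For each gall $C$ with source $s_C$, reticulation $r_C$, and internally vertex-disjoint directed paths $P_1,P_2$ from $s_C$ to $r_C$, and for each gene $g\in\cG$, I first analyse the possible shapes of $V_g\cap V(C)$ in an arbitrary $(G,k)$-gene labelling $F$, where $V_g=\{v:g\in F(v)\}$. A case analysis based on how the observed leaves $L_g=\{x\in\cX:g\in G(x)\}$ distribute relative to $C$---above $s_C$, in subtrees hanging off internal vertices of $P_1$ or $P_2$, or below $r_C$---shows that up to vertex-wise dominance at most two Pareto-optimal shapes arise: one routing $g$ from $s_C$ to $r_C$ along $P_1$, and one along $P_2$. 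The option ``use both paths'' is always dominated vertex-wise, and raising the root $\rho_g$ of the rooted subgraph $N[V_g]$ strictly above $s_C$ is dominated by leaving $\rho_g$ at $s_C$; both claims follow from a direct inclusion argument on the induced vertex sets.

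Because galls are vertex-disjoint, each per-gall choice changes only vertices inside that gall, so the decision problem decomposes. I first check the $k$-constraint at vertices outside every gall and at the cut vertices $s_C,r_C$ for each gall: at these vertices the gene load is fully determined, independently of any routing choice, so the check is direct. If any of these checks fails, output NO. Otherwise, for each gall $C$ I set up a local feasibility instance: let $G_C\subseteq\cG$ be the genes having two local options at $C$, and for each internal vertex $u_i^j$ of $P_i$ let the residual capacity be $K_{i,j}:=k$ minus the load already forced at $u_i^j$ (from genes outside $G_C$ and from genes in $G_C$ that, by a hanging-off leaf below $u_i^j$, appear at $u_i^j$ regardless of their local choice). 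Each $g\in G_C$ must then be routed along exactly one of $P_1$ or $P_2$, consuming one unit of residual capacity at each level on the chosen path that is not already forced.

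Because the ``forced-onto-$P_i$'' levels for each $g$ are monotone down each path (a leaf hanging off $u_i^{j'}$ forces $g$ at every $u_i^j$ with $j\le j'$), the resulting constraints are nested and encode as a two-pipe max-flow instance, one pipe for $P_1$ and one for $P_2$, each a sequence of capacitated arcs reflecting the $K_{i,j}$ and a source that injects each $g\in G_C$ at its highest ``free'' level. A standard max-flow computation decides feasibility in polynomial time, and aggregating across all galls and the forced-load checks gives a polynomial algorithm in $|V(N)|$ and $|\cG|$; output YES iff every per-gall flow is feasible. The main obstacle is the case analysis in the first step: one must argue carefully that at most two non-dominated shapes of $V_g\cap V(C)$ need be considered, even though $\rho_g$ could in principle be placed at any ancestor of $L_g$, potentially enabling many configurations across successive galls. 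Vertex-disjointness of galls is essential here: raising $\rho_g$ never opens new choices outside the single gall that contains it, which is precisely what makes the per-gall decomposition go through.
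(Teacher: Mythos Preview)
Your overall strategy---decompose by gall and reduce each gall to a flow problem---matches the paper's, but the per-gall analysis contains a genuine gap. You assert that the load at the gall source $s_C$ is ``fully determined, independently of any routing choice,'' and that a leaf hanging off $u_i^{j'}$ forces $g$ onto every $u_i^{j}$ with $j\le j'$. Both claims fail. Take a gene $g$ whose only occurrences are at a leaf $x_1$ hanging off an interior vertex $u_1^{j}$ of $P_1$ and at a leaf $x_2$ below $r_C$. One Pareto-optimal shape sets $\rho_g=u_1^{j}$ and runs $g$ down $P_1$ from $u_1^{j}$ to $r_C$; here $g\notin F(s_C)$ and $g\notin F(u_1^{j''})$ for $j''<j$. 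The other sets $\rho_g=s_C$, runs $g$ down $P_1$ only as far as $u_1^{j}$ (to reach $x_1$), and reaches $r_C$ via all of $P_2$; here $g\in F(s_C)$ and $g\in F(u_1^{j''})$ for every $j''\le j$. These two shapes are incomparable, so the load at $s_C$ and at $u_1^{1},\dots,u_1^{j-1}$ genuinely depends on the choice, and the second option consumes capacity on \emph{both} pipes simultaneously. Your residual-capacity computation and your two-pipe flow model are therefore mis-specified for such genes.

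The paper avoids exactly this pitfall. Its lower bound $F^*$ (vertex $v$ gets $g$ iff $v$ lies on some $x_1$--$x_2$ path avoiding the reticulation $w_p$, or on every $x_1$--$x_2$ path) is the true intersection of all $G$-gene labellings; in the example above $F^*$ places $g$ only at $u_1^{j}$ and $r_C$ among cycle vertices, leaving $s_C$ unassigned. The flow network is then built on the \emph{entire} cycle $w_1,\dots,w_p$, including $s_C$, with capacity $k-|F^*(w_i)|$ at each vertex; each disconnected gene $g\in\cG'$ receives a source $s_g$ adjacent to the two cycle neighbours of its $F^*$-interval, and the unit of flow may go either way around the cycle, correctly charging $s_C$'s capacity when the long route is chosen (Lemma~\ref{flow}). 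The reduction to a single gall is then handled by an explicit cut (Lemma~\ref{repeat}) rather than by treating all galls at once. Your plan is repairable along these lines, but as written the forced-load bookkeeping and the flow encoding are not correct.
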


\noindent Theorem~\ref{prop:level1}, together with the following corollary, is established in
Section~\ref{easysec}.

\begin{corollary}
 \label{gallcor}
Let $T$ be a rooted phylogenetic tree on $\cX$, let $\cG$ be a set of genes, let $G: \cX \rightarrow 2^\cG$ be a genome assignment, and let $k$ be a positive integer. If $h$ is a fixed positive integer, then there is a polynomial-time algorithm for deciding whether or not there is a galled tree $N$ on $\cX$ that can be obtained from $T$ by adding at most $h$ arcs and which exhibits a $(G,k)$-gene labelling.
\end{corollary}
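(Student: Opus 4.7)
The plan is brute-force enumeration combined with Theorem~\ref{prop:level1}. I would enumerate every galled tree $N$ on $\cX$ that can be obtained from $T$ by adding at most $h$ arcs, and for each such $N$ apply the polynomial-time algorithm of Theorem~\ref{prop:level1} to decide whether $N$ exhibits a $(G,k)$-gene labelling. The overall answer is YES if and only if some enumerated candidate succeeds.

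The key point is that, because $h$ is fixed, this set of candidates has polynomial size. Indeed, any candidate $N$ contains a subdivision $T'$ of $T$, and the $i\le h$ added arcs of $N$ contribute at most $2h$ subdivision vertices to $T'$ (two per added arc). Hence a candidate $N$ is specified by the following data: (i) an integer $i\in\{0,1,\ldots,h\}$; (ii) for each of the $i$ added arcs, the location of its tail and its head, where each endpoint is either an original vertex of $T$ or a new subdivision vertex labelled ``insert on arc $e$'' for some $e\in E(T)$; and (iii) for each arc of $T$ containing more than one subdivision vertex, the order in which those subdivision vertices appear along it. Writing $n=|V(T)|$, there are $O(n)$ possible endpoint locations in (ii), hence $O(n^2)$ ordered pairs per added arc, while (i) contributes $h+1$ possibilities and (iii) at most $(2h)!$. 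Since $h$ is a constant, the total number of candidate networks is $O(n^{2h})$, i.e.\ polynomial in the input size.

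For each enumerated candidate, one verifies in polynomial time that $N$ is a phylogenetic network on $\cX$ (check acyclicity as a digraph and the degree conditions at the root and at vertices of outdegree one) and that $N$ is a galled tree (each of the $i$ added arcs closes exactly one underlying cycle together with paths in $T'$; check that these $i$ cycles are pairwise vertex-disjoint). Surviving candidates are passed to the algorithm of Theorem~\ref{prop:level1}. Since the enumeration has polynomial size and each individual test is polynomial, the whole procedure runs in polynomial time.

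The main obstacle is the bookkeeping of steps (i)--(iii): one has to argue carefully that the scheme really hits every galled tree that can be obtained from $T$ by adding at most $h$ arcs, without over-counting to a super-polynomial degree. This rests on the simple but essential observation that, for fixed $h$, the structural ``deviation'' of $N$ from $T$ is of constant size -- at most $h$ extra arcs and at most $2h$ extra vertices -- so the combinatorial types of such $N$ form only a polynomial family parametrised by endpoint and subdivision choices in $T$.
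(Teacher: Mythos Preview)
Your proposal is correct and follows essentially the same approach as the paper's own proof: brute-force enumeration of all networks obtainable from $T$ by at most $2h$ subdivisions and at most $h$ added arcs, followed by a galled-tree check and an application of Theorem~\ref{prop:level1}. The paper is terser---it simply notes that because $N$ is a galled tree, all vertices of $N$ lie in the subdivision $T'$, hence at most $2h$ subdivisions suffice---whereas you spell out the $O(n^{2h})$ counting more explicitly; but the underlying idea is identical.
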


\section{HOW MANY GENE TRANSFERS ARE NEEDED?}
\label{boundproof}

In this section, we prove Theorem~\ref{prop:LB}.

\noindent{\em Proof of Theorem~\ref{prop:LB}.}
For the proof of (i), suppose that a network $N$ admitting a $(G,k)$-gene labelling can be obtained by adding $\ell(T,G,k)$ arcs to~$T$. It follows that there exists a tree~$T'$ that is a subdivision of~$T$ and a subgraph of~$N$. In other words,~$T'$ is an embedding of~$T$ in~$N$. An arc of~$N$ is said to be an \emph{lgt-arc} if it is not an arc of~$T'$. Consider two leaves~$x_1,x_2$ and their lowest common ancestor~$v$ in~$T'$. Suppose that for a gene~$g\in\mathcal{G}$ we have~$g\in G(x_1)$ and~$g\in G(x_2)$. Since network~$N$ admits a~$(G,k)$-gene labeling~$F$, there has to be an undirected path from~$x_1$ to~$x_2$ in~$N$ containing only vertices~$u$ with~$g\in F(u)$. Furthermore, at least one such undirected path has to consist of two directed paths, one ending in~$x_1$ and one ending in~$x_2$, since the subgraph of~$N$ induced by~$\{v\in V | g\in F(v)\}$ is rooted and hence contains a rooted tree. There are four possibilities. Firstly, it is possible that this undirected~$x_1-x_2$-path passes through~$v$, implying that~$g\in F(v)$.
The remaining three cases are illustrated in Fig.~\ref{fig:bound}.  The first case is that the undirected~$x_1-x_2$-path  uses an lgt-arc~$(a,d)$ between two vertices~$a,d$ that have~$v$ as their lowest common ancestor in $T'$. A second possibility is that the path uses two lgt-arcs~$(a,b)$ and~$(c,d)$ such that~$v$ is the lowest common ancestor of~$a$ and~$d$ in~$T'$. Finally, it is also possible that the path uses two lgt-arcs~$(b,a)$ and~$(c,d)$ such that~$v$ is the lowest common ancestor of~$a$ and~$d$ in~$T'$.  Thus, for any vertex~$v$ with~$n(v)>k$, there has to be either an lgt-arc~$(a,d)$ or two lgt-arcs~$(a,b),(c,d)$ or two lgt-arcs~$(b,a),(c,d)$, with~$a$ and~$d$ two vertices that have~$v$ as their lowest common ancestor in~$T'$.

Given a vertex~$v$ of~$T$, we say that an lgt-arc~$(s,t)$ \emph{satisfies}~$v$ if~$v$ is the lowest common ancestor of~$s$ and~$t$ in~$T'$. Since in a tree there is a unique lowest common ancestor, each single lgt-arc satisfies at most one vertex. Furthermore, we say that a pair of lgt-arcs~$\{(s,t),(s',t')\}$ \emph{satisfies}~$v$ if~$v$ is the lowest common ancestor of either~$s$ and~$t'$, or of~$s'$ and~$t$ or of~$t$ and~$t'$ in~$T'$. It follows directly that each pair of lgt-arcs satisfies at most three vertices. Since there are~$\ell(T,G,k)$ lgt-arcs, in total at most $3{\ell(T,G,k)\choose 2} + \ell(T,G,k)$ vertices~$v$ with~$n(v)>k$ can be satisfied. From the previous paragraph we know that each vertex~$v$ with~$n(v)>k$ needs to be satisfied, either by a single lgt-arc or by a pair of lgt-arcs. It follows that there can be at most~$3{\ell(T,G,k)\choose 2} + \ell(T,G,k)$ vertices~$v$ with~$n(v)>k$. Part (i) follows by generously bounding~$3{\ell(T,G,k)\choose 2} + \ell(T,G,k)$ by~$\frac{3}{2}\ell(T,G,k)^2$.

\begin{figure}[ht]\centering
\includegraphics[width=\textwidth]{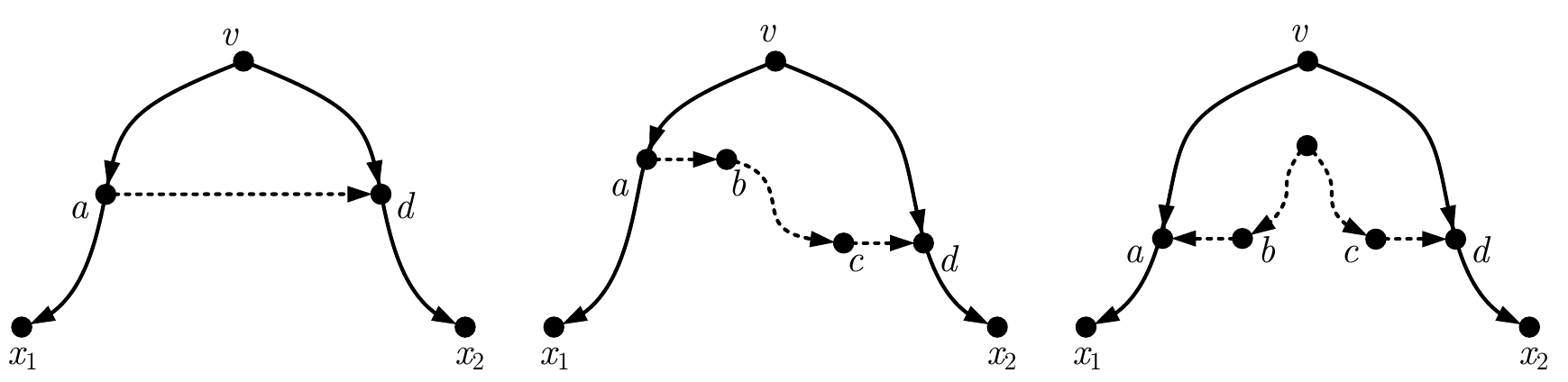}
\caption{Illustration for the proof of Theorem~\ref{prop:LB}. The three cases apply, without loss of generality, whenever ~$g\in G(x_1)$,~$g\in G(x_2)$, but~$g\not\in F(v)$, where~$v$ is the lowest common ancestor of~$x_1$ and~$x_2$ in~$T'$. Straight lines denote arcs, while curves denote paths. Solid curves are in~$T'$, while dotted lines/curves can be either in~$T'$ or only in~$N$.}\label{fig:bound}
\end{figure}

For (ii), we can construct a network~$N$ admitting a~$(G,k)$-gene labelling as follows. We select a set~$G^0$ of~$k$ arbitrary genes in~$\mathcal{G}$ and set~$F(v)=G^0$ for each internal vertex~$v$ of~$T$. The third property of a~$(G,k)$-gene labelling is now satisfied for the genes in~$G^0$. For the remaining~$|\mathcal{G}|-k$ genes we do the following. We introduce ~$f = \lceil\frac{|\mathcal{G}|-k}{k}\rceil$ additional isolated vertices~$v_1,\ldots ,v_f$ and label these vertices by disjoint sets~$F(v_1),\ldots ,F(v_f)$ that partition~$\mathcal{G}\setminus G^0$ and contain at most~$k$ genes each. Finally, we add arcs from the root to each~$v_i$ and from each~$v_i$ to each leaf~$x$ with~$G(x)\cap F(v_i)\neq\emptyset$. This leads to the claimed upper bound. \qed

To improve upon this simple upper bound turns out to be challenging. This can perhaps be explained by the results in the next section, in which we show that, even if the network~$N$ is given and~$k=1$, it is NP-complete to decide if a~$(G,k)$-gene labelling of~$N$ exists.

\section{UNRAVELLING LATERAL GENE TRANSFER IS HARD}
\label{unravel}

We begin this section by first showing that {\sc Gene Labelling} is NP-complete.  First consider the following decision problem:

\noindent\begin{tabular}{lp{0.85\textwidth}}
\multicolumn{2}{l}{\textsc{Directed Acyclic Subgraph Homeomorphism (DASH)}} \\
\textit{Given:} & Directed acyclic graphs~$D=(V_D,E_D)$ and $P=(V_P,E_P)$ with
$V_P\subseteq V_D$.\\
\textit{Question:} & Is~$P$ homeomorphic to a subgraph of~$D$? \\
\end{tabular}

A graph~$P$ is \emph{homeomorphic} to a graph~$H$ if~$H$ can be obtained from~$P$ by replacing arcs~$(u,v)$ by internally vertex-disjoint directed~$u-v$ paths. Hence, \textsc{DASH} can be seen as a disjoint-paths problem. The graph~$P$ is called the ``pattern graph''. It was observed by Fortune et al.~\cite{FortuneEtAl1980} that NP-hardness of \textsc{DASH} follows from a result of Even, Itai, and Shamir~\cite{EvenEtAl1976} on multi-commodity flows.

\noindent {\bf Theorem~\ref{prop:homeomorphism}.}
{\em The decision problem {\sc Gene Labelling} is NP-complete even if $k=1$.}

\noindent {\em Proof.}
The reduction is from {\sc DASH}. Let $(D,P)$ be an instance of \textsc{DASH}.  We begin by showing that we may assume, for each vertex $u$ in $P$, we have $d_P^-(u)+d_P^+(u)=1$. To see this, let $D'$ and $P'$ be the digraphs obtained from $D$ and $P$, respectively, by iteratively doing the following for each vertex $v$ in $P$:
\begin{itemize}
\item[(i)] Let $\{s_1,s_2,\ldots,s_i\}$ be the set of in-neighbours of $v$ in $P$ and let
$\{t_1,t_2,\ldots,t_j\}$ be the set of out-neighbours of $v$ in $P$.

\item[(ii)] In $P$, replace $v$ and the arcs $(s_1,v),\ldots,(s_i,v)$ and $(v,t_1),\ldots,(v,t_j)$ with the new vertices $v_1,v_2,\ldots,v_{i+j}$ and the new arcs $(s_1,v_1),\ldots,(s_i,v_i)$ and $(v_{i+1},t_1),\ldots,(v_{i+j},t_j)$.

\item[(iii)] Let $\{x_1,x_2,\ldots,x_r\}$ be the set of in-neighbours of $v$ in $D$ and let
$\{y_1,y_2,\ldots,y_s\}$ be the set of out-neighbours of $v$ in $D$.

\item[(iv)] In $D$, replace $v$ and the arcs $(x_1,v),\ldots,(x_r,v)$ and $(v,y_1),\ldots,(v,y_s)$ with the new vertices $v_1,v_2,\ldots,v_{i+j}$ and the new arcs
\begin{align*}
(x_1,v_1),(x_2,v_1),\ldots, (x_r,v_1),  (x_1,v_2), & (x_2,v_2),\ldots, (x_r,v_2), \\
& \ldots,(x_1,v_i),(x_2,v_i),\ldots,(x_r,v_i)
\end{align*}
and
\begin{align*}
(v_{i+1},y_1),(v_{i+1},y_2),\ldots,  (v_{i+1},y_s), & (v_{i+2},y_1), (v_{i+2},y_2),\ldots, (v_{i+2},y_s),\\
& \ldots,(v_{i+j},y_1),(v_{i+j},y_2),
\ldots,(v_{i+j},y_s).
\end{align*}
\end{itemize}
At the end of this iterative construction, for each vertex $u$ in $P'$, we
have $d_{P'}^-(u)+d_{P'}^+(u)=1$.
Moreover, it is straightforward to check that $P'$ is homeomorphic to a subgraph of $D'$ if and
only if $P$ is homeomorphic to a subgraph of $D$. It now follows that we may assume that our given instance $(D,P)$ of {\sc DASH} is of the form at the completion of this construction.

We next describe a polynomial-time transformation of our instance $(D,P)$ of DASH into an instance   of {\sc Gene Labelling} with $k=1$. Set $k=1$. We define $N$, $\mathcal X$, $\mathcal G$, and the function $G:{\mathcal X}\rightarrow 2^{\mathcal G}$ iteratively as follows. Initially, set $\mathcal X$ and $\mathcal G$ to be both empty. Let $N$ be the phylogenetic network obtained from $D=(V,A)$ by applying the following sequence of operations:
\begin{itemize}
\item[(O-I)]  For each arc $a=(u,v)$ of $P$, add a new gene $g_a$ to $\mathcal G$, add new leaf
vertices $\ell_u,\ell_v$ to $V$ and to $\mathcal X$, add new arcs $(u,\ell_u)$ and $(v,\ell_v)$ to $A$, and set $G(\ell_u)=G(\ell_v)=\{g_a\}$. Furthermore, delete all incoming arcs of $u$ from $A$. At the end of (I), the constructions of the sets $\mathcal X$ and $\mathcal G$, and the function
$G:{\mathcal X}\rightarrow 2^{\mathcal G}$ are completed.

\item[(O-II)] Repeatedly remove all leaves of the resulting network not in
$\mathcal X$ and repeatedly remove all vertices of indegree zero that do not have an element of $\mathcal X$ as a child.

\item[(O-III)] Finally, root the resulting network by choosing a vertex of indegree zero as a root and then adding an arc from this root to each other vertex of indegree zero. Setting $N$ to be the resulting phylogenetic network on $\mathcal X$, we have now constructed the desired instance
of {\sc Gene Labelling}.
\end{itemize}
An example of this construction is shown in Fig.~\ref{fig:homeomorphism}. Note that, while $D$ may not be connected, $N$ is connected because of (O-III). We complete the proof by showing that $N$ admits a
$(G,1)$-gene labelling if and only if $P$ is homeomorphic to a subgraph of $D$.

\begin{figure}[ht]\centering
\includegraphics[width=.8\textwidth]{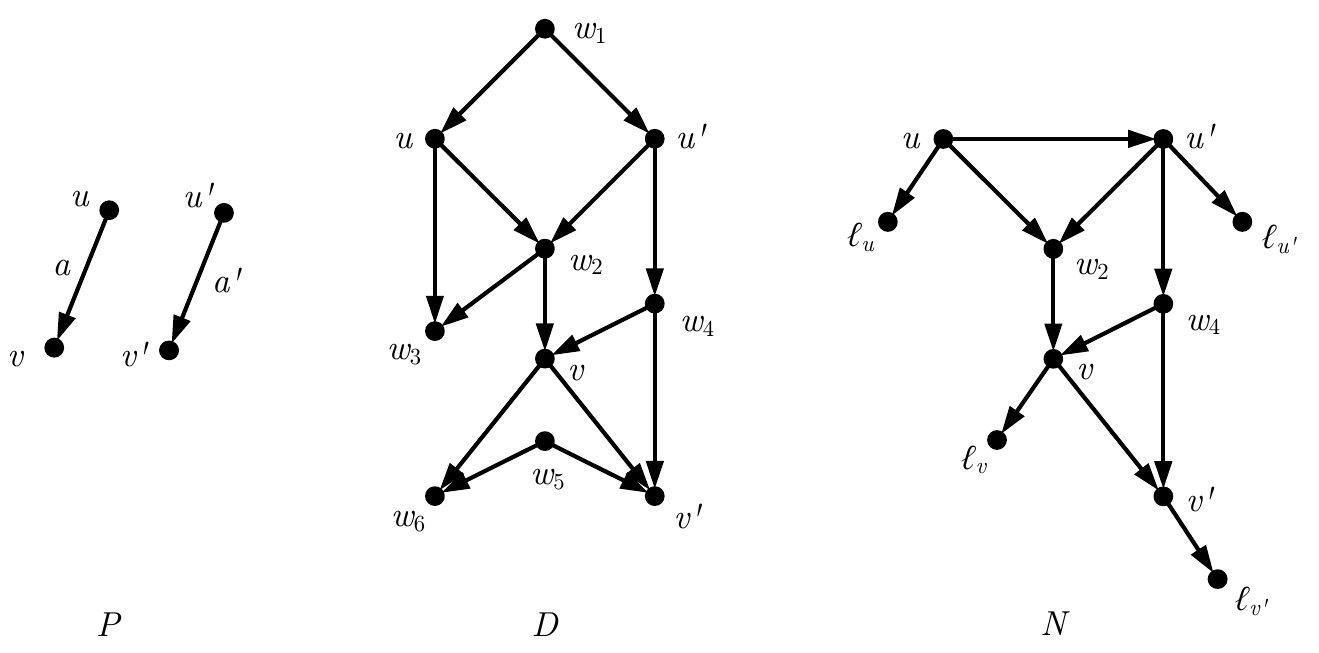}
\caption{An example of the reduction in the proof of Theorem~\ref{prop:homeomorphism}. From an instance~$(P,D)$ of \textsc{DASH}, a phylogenetic network~$N$ is constructed with leaf-labelling~$G(\ell_u)=G(\ell_v)=\{g_a\}$ and~$G(\ell_{u'})=G(\ell_{v'})=\{g_{a'}\}$. Disjoint paths~$u\rightarrow w_2\rightarrow v$ and~$u'\rightarrow w_4\rightarrow v'$ in~$D$ correspond to a labelling~$F(\ell_u)=F(u)=F(w_2)=F(v)=F(\ell_v)=\{g_a\},F(\ell_{u'})=F(u')=F(w_4)=F(v')=F(\ell_{v'})=\{g_{a'}\}$.}\label{fig:homeomorphism}
\end{figure}

Suppose that~$P$ is homeomorphic to a subgraph of~$D$. Then, for each arc~$a=(u,v)$
of~$P$, there exists a directed~$u-v$ path in~$D$ such that all these directed paths are pairwise vertex disjoint. We first claim that for each such~$u-v$ path in~$D$, there exists a
corresponding~$u-v$ path in~$N$. To see this, observe that, in the construction of~$N$ from~$D$, the only arcs deleted are those arcs directed into a vertex, $u$ say, for which $u$ is a vertex in $P$, and arcs incident with a vertex, $w$ say, for which either there is no directed path from $w$ to a vertex in $\mathcal X$ or there is no directed path from a parent of a vertex in $\mathcal X$ to $w$.

None of these deletions deletes an arc on any $u-v$ path in $D$ and so the claim holds.
Now, for each arc~$a=(u,v)$ of~$P$ and for each vertex~$w$ on the associated~$u-v$ path in~$N$, set~$F(w)=\{g_a\}$. Since the children~$\ell_u$ of~$u$ and~$\ell_v$ of~$v$ are the only other vertices with a label containing~$g_a$, the subgraph of~$N=(V,A)$ induced by~$\{w\in V | g_a\in F(w)\}$ is rooted and connected. Labelling all remaining vertices~$w$ by~$F(w)=\emptyset$ thus leads
to a~$(G,1)$-gene labelling of~$N$.

Now suppose that~$F$ is a~$(G,1)$-gene labelling of~$N$. It remains to show that~$P$ is homeomorphic to a subgraph of~$D$. Consider a gene~$g_a\in\mathcal{G}$, and let $a=(u,v)$ be the associated arc of~$P$. Since~$F$ is a~$(G,1)$-gene labelling, the subgraph of~$N$ induced by
$\{w\in V(N): g_a\in F(w)\}$ is connected. Furthermore, each of the arcs added in (O-III) in the construction of $N$ joins two vertices that  are assigned distinct genes in $\mathcal G$ by  $F$ as
$F$ is a $(G,1)$-labelling of $N$. Thus none of these arcs are contained in the subgraph of~$N$ induced by~$\{w\in V(N): g_a\in F(w)\}$. Since~$u$ has no other incoming arcs,~$u$ has indegree zero in this subgraph. Since the child~$\ell_v$ of~$v$ is also labelled~$F(\ell_v)=\{g_a\}$, it follows that~$N$ contains a directed path from~$u$ to~$v$ whose vertices are assigned $\{g_a\}$ under $F$. This path is also a directed path in $D$. Moreover, for two distinct genes $g_a,g_b\in \mathcal G$, these paths are pairwise disjoints and so they are pairwise disjoint in $D$. The union of these paths in $D$ forms a subgraph~$H$ of~$D$ such that~$P$ is homeomorphic to~$H$. This completes the proof of the theorem. \qed

We turn now to the proof of Theorem~\ref{Gktree}, which is based on the concepts of tree-width and tree-decomposition from graph theory -- we define these notions now; for further background the interested reader may wish to consult~\cite{die}.

A \emph{tree decomposition} of a graph~$H=(V_H,E_H)$ is a pair~$(T,\{X_i : i\in I\})$ where~$T=(I,E_T)$ is a tree and, for all $i\in I$, the set $X_i$ is a subset of $V_H$ such that:
\begin{itemize}
\item[(i)] $\bigcup_{i\in I} X_i = V_H$;
\item[(ii)] for each~$(u,v)\in E_H$, there exists an~$i\in I$ with~$u,v\in X_i$;
\item[(iii)]  for each~$v\in V_H$, the subgraph of~$T$ induced by~$\{i\in I : v\in X_i\}$ is connected.
\end{itemize}
The \emph{width} of the tree decomposition is defined as $\max_{i\in I} |X_i|-1$.

We use the following NP-complete problem for the reduction in the proof of the theorem.

\noindent\begin{tabular}{lp{0.85\textwidth}}
\multicolumn{2}{l}{\textsc{Treewidth}}\\
\textit{Given:} & An undirected graph~$H=(V_H,E_H)$ and a natural number~$k'$. \\
\textit{Question:} & Does there exist a tree decomposition of~$H$ with width at most~$k'$? \\
\end{tabular}

\noindent {\bf Theorem~\ref{Gktree}.}
{\em The decision problem {\sc $(G,k)$-Tree} is NP-complete.}

\noindent {\em Proof.}
The reduction is from {\sc Treewidth}. Let $(H,k')$ be an instance of \textsc{Treewidth}, and
set $\mathcal{X}=E_H$, $\mathcal{G}=V_H$, $G(x)=\{u,v\}$ for each edge $x=\{u,v\}\in E_H$, and
$k=k'+1$. We complete the proof by showing that there exists a tree decomposition of~$H$ with
width at most~$k'$ if and only if there exists a phylogenetic tree~$N$ on~$\mathcal{X}$ that admits
a~$(G,k)$-gene labelling.

Firstly, let~$(T,\{X_i : i\in I\})$ be a tree decomposition of~$H$ with width~$k'$. For each~$\{u,v\}\in E_H$, there exists an~$i\in I$ with~$u,v\in X_i$. Hence, for each taxon~${x\in\mathcal{X}}$, there exists a vertex~$i$ of~$T$ with~$G(x)\subseteq X_i$. We construct~$N$ from~$T$ by choosing an arbitrary vertex as a root, directing all edges away from the root and, for each~$x\in\mathcal{X}$, adding a leaf~$x$ and an arc~$(i,x)$ where $i$ is an arbitrary vertex of~$T$ with
$G(x)\subseteq X_i$. Repeatedly deleting leaves not in $\mathcal X$, set $N$ to be the resulting rooted phylogenetic tree on $\mathcal X$. We can now obtain a~$(G,k)$-gene labelling~$F$ of~$N$ by setting~$F(x)=G(x)$ for each leaf~$x\in\mathcal{X}$ and~${F(i)=X_i}$ for each other vertex. For each gene~$g\in\mathcal{G}$, the subgraph of~${N=(V,A)}$ induced by~$\{v\in V : g\in F(v)\}$ is connected by property (iii) of a tree decomposition, and is rooted as~$N$ is a rooted phylogenetic tree.

Now suppose that there exists a phylogenetic tree~$N$ on~$\mathcal{X}$ and a~$(G,k)$-gene labelling~$F$ of~$N=(V,A)$. Then we can obtain a tree decomposition ${(T,\{X_i : i\in I\})}$ of~$H$ by setting~$I=V$ and $X_i=F(i)$ for all $i\in I$, and defining~$T$ to be the tree obtained from $N$ by ignoring the rooting and thus orientation of each of the arcs. All properties of a tree decomposition are clearly satisfied, and the width is at most~$k'=k-1$ because~$|F(i)|\leq k$ by the definition of
a~$(G,k)$-gene labelling.
\qed


\section{$\ldots$ BUT SOMETIMES IT IS EASY}
\label{easysec}

Let $N$ be a galled tree on $\cX$, let $\cG$ be a set of genes, let $G:\cX \rightarrow 2^\cG$ be a genome assignment, and let $k$ be a positive integer.  The main result of this section shows that there is a polynomial-time algorithm for deciding whether $N$ exhibits a $(G,k)$-labelling. If $N$ is a phylogenetic tree, then this problem is equivalent to deciding if $\ell(N,G,k)=0$.

\begin{proposition}
Let $T$ be a phylogenetic tree on $\cX$, let $\cG$ be a set of genes, let $G:\cX \rightarrow 2^\cG$ be a genome assignment, and  let $k$ be a positive integer. Then there is a polynomial-time algorithm for deciding whether $\ell(T,G,k)=0$.
\label{zero}
\end{proposition}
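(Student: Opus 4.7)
The plan is to show that when $N$ is a rooted tree, a $(G,k)$-gene labelling is essentially forced, so that deciding existence reduces to a straightforward counting check. First I would observe that for a rooted tree $T$, the requirement in property (III) that ``the sub-digraph induced by $V_g:=\{v : g\in F(v)\}$ is rooted and connected'' is equivalent to the simpler requirement that $V_g$ span a connected subgraph of the underlying undirected tree of $T$. Indeed, any such connected subgraph $S$ has a unique vertex $r$ closest to the root of $T$, and this $r$ is an ancestor in $T$ of every other vertex of $S$ (otherwise $\mrca_T(r,v)$ for some $v\in S$ would lie in $S$ yet be strictly closer to the root than $r$, since $S$ is closed under taking pairwise MRCAs of its elements). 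The vertex $r$ then serves as the root of the induced sub-digraph.

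Using this, for each $g\in\cG$ set $S_g:=\{x\in\cX:g\in G(x)\}$. Property (I) forces $S_g\subseteq V_g$, and the equivalence above forces $V_g$ to contain the \emph{convex hull} $H_g$ of $S_g$ in $T$, namely the union of the directed paths from $\mrca_T(S_g)$ down to each leaf of $S_g$; this $H_g$ is uniquely determined by $T$ and $G$. I would next argue that we may assume $V_g=H_g$ for every $g$: starting from any valid labelling $F$, the modification $V_g\mapsto H_g$ (i.e.\ deleting $g$ from $F(v)$ for each $v\in V_g\setminus H_g$) preserves (I) and (III) and cannot increase any $|F(v)|$, hence preserves (II). Consequently, $T$ exhibits a $(G,k)$-gene labelling if and only if the canonical labelling $F^\ast(v):=\{g\in\cG:v\in H_g\}$ satisfies $|F^\ast(v)|\le k$ for every vertex $v$ of $T$.

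The decision procedure is then: for each $g\in\cG$, compute $\mrca_T(S_g)$ and mark the ancestor-paths from it down to the leaves of $S_g$, incrementing a counter at each vertex visited; finally test each counter against $k$. The total running time is clearly polynomial in $|\cX|$ and $|\cG|$. The subtlest step is the monotonicity argument -- that shrinking each $V_g$ to $H_g$ keeps (III) intact while only helping (II) -- which is where the rigidity of the tree structure is essential, and where the argument breaks down on general phylogenetic networks, since reticulations allow several incomparable ways to route a gene's presence between the leaves of $S_g$. That loss of rigidity is, not coincidentally, the source of the NP-hardness established in Theorem~\ref{prop:homeomorphism}.
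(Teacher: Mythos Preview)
Your proposal is correct and follows essentially the same route as the paper: both identify the unique minimal $G$-gene labelling of $T$ (your $F^\ast$, the paper's ``minimal subtree connecting the leaves containing $g$'') and reduce the decision to checking $|F^\ast(v)|\le k$ at every vertex. Your write-up is in fact more explicit about why this canonical labelling dominates all others and why rooted-connectedness collapses to mere connectedness on a tree, whereas the paper dispatches both points with ``it is easily seen.''
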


\begin{proof}
Deciding whether $\ell(T,G,k)=0$ is equivalent to deciding if $T$ has a $(G,k)$-gene labelling. With this in mind, it is easily seen that the following $G$-gene labelling function $F$ of $T$ minimizes $k$. For all
$v\in V$, the gene $g\in \cG$ is in $F(v)$ precisely if $v$ is a vertex of the minimal subtree of $T$ that connects those leaves $x$ for which $g\in G(x)$. If $|F(v)|\le k$ for $v$, then $F$ is a $(G,k)$-gene labelling; otherwise there is no such gene labelling of $T$. \qed
\end{proof}

Proposition~\ref{one} (below) establishes the main result when $N$ has exactly one gall. We will use this proposition as the base case for an inductive proof of the main result. The proof of this proposition relies on the following construction. Let $N$ be a galled tree on $\cX$ with exactly one gall. Thus the undirected graph
underlying $N$ has exactly one cycle. Label (in order) the vertices of this cycle $w_1,w_2,\ldots, w_p$, where $w_p$ is the unique vertex in $N$ with two arcs directed into it.

Let $F^*$ be the following map from the vertex set $V$ of $N$ to $2^\cG$. For each  $v\in V$, the gene $g\in \cG$  is in
$F^*(v)$ precisely if, ignoring the direction of the arcs, either:
\begin{itemize}
\item[(i)] there is a pair of leaves $x_1$ and $x_2$ with $g\in G(x_1)$ and $g\in G(x_2)$, and $v$ is on a path between $x_1$ and $x_2$ that avoids $w_p$, or

\item[(ii)] there is a pair of leaves $x_1$ and $x_2$ with $g\in G(x_1)$ and $g\in G(x_2)$, and $v$ is on {\em all} paths between $x_1$ and $x_2$.
\end{itemize}
The following two observations are important for what follows. First, if $F$ is a $G$-gene labelling of $N$, then it is easily seen that $F^*(v)\subseteq F(v)$ for all $v\in V$. Second, $F^*$ is not necessarily a $G$-gene labelling of $N$. The exact reason for this is that there can be a gene $g\in \cG$ such that the sub-digraph of $N$ induced by $\{v\in V : g\in F^*(v)\}$ consists of two rooted connected components; one lying below $w_p$ (more precisely, in the subgraph of $N$ induced by the vertices that are reachable from $w_p$ by a directed path) and at one lying above $w_p$ (more precisely, in the subgraph of $N$ induced by the vertices that are not reachable from $w_p$ by a directed path).

Now let $\cG'$ be the subset of genes $g\in \cG$ for which the sub-digraph of $N$ induced by
$\{v\in V : g\in F^*(v)\}$ is disconnected. We extend $F^*$ to a $G$-gene labelling $F$ of $N$ by reformulating the problem as an undirected network flow problem and then using its solution to identify the extension. Here one can view each edge $\{a,b\}$ as the two arcs $(a,b)$ and $(b,a)$.
We construct an undirected graph $U$ from $N$ by starting with the sub-digraph of $N$  induced by $\{w_1,w_2,\ldots,w_p\}$ and ignoring the direction of the arcs, adding a source vertex $s$, and, for each gene~$g\in\cG'$, adding a new vertex $s_g$ and the three edges $\{s,s_g\}$,
$\{s_g,w_{i_1-1}\}$, and $\{s_g,w_{i_2+1}\}$, where $i_1$ and $i_2$ are the smallest and
largest index $i\neq p$ for which $g\in F^*(w_i)$. Now assign each $s_g$ capacity $1$ and, for each
$i\in \{1,2,\ldots,p-1\}$, assign $w_i$ capacity $k-|F^*(w_i)|$.

To  illustrate the above construction, consider the galled tree $N$ shown in Fig.~\ref{fig:exampleinput}. Each leaf $x$ of $N$ is labelled by the set $G(x)$ of input genes observed in the corresponding taxon. The map $F^*$ is shown in Fig.~\ref{fig:fstar}. The undirected graph $U$ with $k=3$ is shown in Fig.~\ref{fig:auxgraph}.

\begin{figure}[ht]
  \centering
  \begin{subfigure}[]
  {
    \centering
    \includegraphics[scale=0.7]{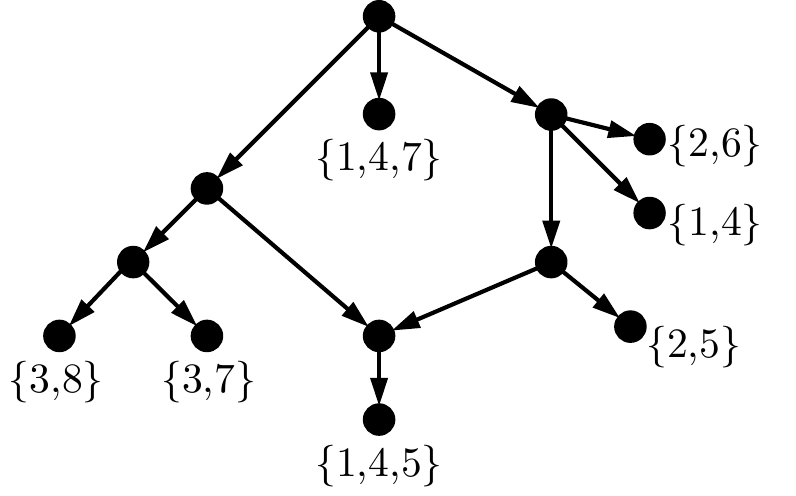}
    \label{fig:exampleinput}
  }
  \end{subfigure}
  \hspace{1cm}
  \begin{subfigure}[]
  {
    \centering
    \includegraphics[scale=0.7]{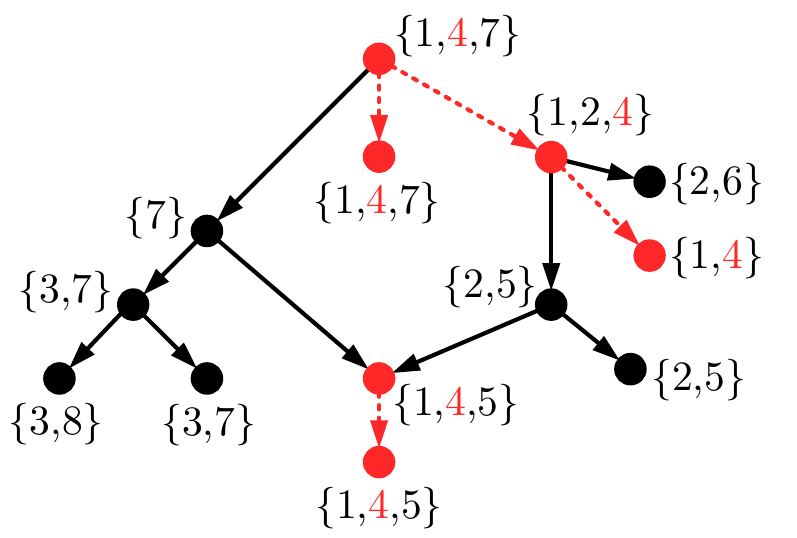}
    \label{fig:fstar}
  }
  \end{subfigure}
    \begin{subfigure}[]
  {
    \centering
    \includegraphics[scale=0.7]{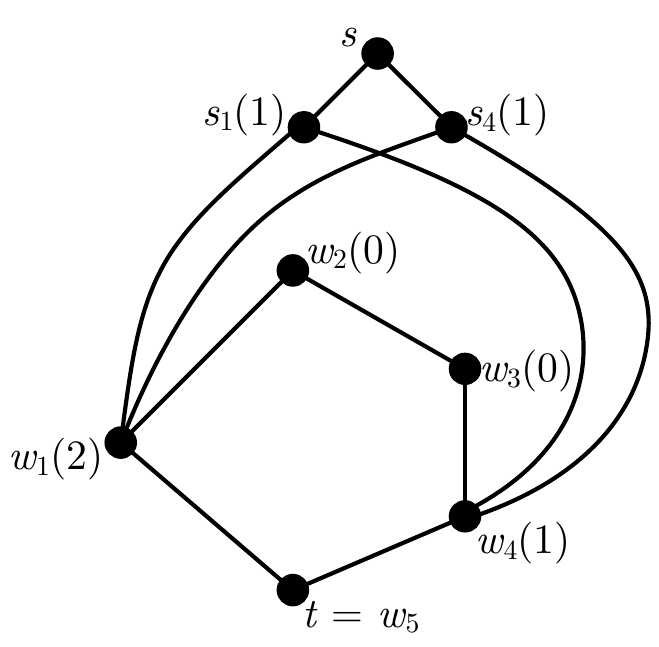}
    \label{fig:auxgraph}
  }
  \end{subfigure}
  \hspace{1cm}
  \begin{subfigure}[]
  {
    \centering
    \includegraphics[scale=0.7]{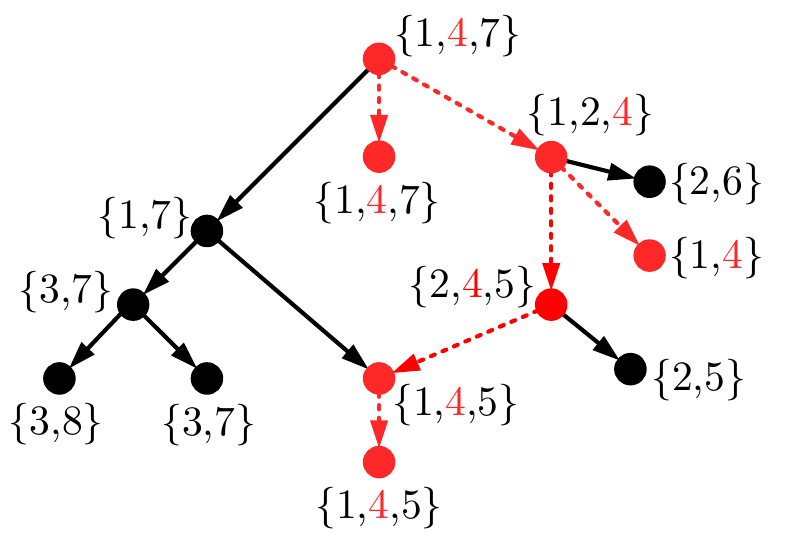}
    \label{fig:fopt}
  }
  \end{subfigure}
  \caption{(a) A galled tree $N$ with one gall. Each leaf $x$ of $N$ is labelled
  by $G(x)$. (b) The initial labelling~$F^*$ in which, for example, the sub-digraph of $N$ induced by
  $\{v\in V : 4\in F^*(v)\}$ (displayed by the dashed arcs and their end vertices) consists of two connected components. (c) Auxiliary graph $U$ with capacities in parentheses. (d) A $(G,3)$-gene labelling of $N$. This gene labelling corresponds to a maximum flow in $U$ which sends one unit of flow through $s_1$ and $w_1$ and one unit of flow through $s_4$ and $w_4$.
}
  \label{fig:example}
\end{figure}

\begin{lemma}
There exists an integer flow in $U$ from $s$ to $w_p$ with value $|\cG'|$ if and only if there exists a
$(G,k)$-gene labelling of $N$.  Moreover, if there is such an integer flow, then it leads to a
$(G,k)$-gene labelling of $N$.
\label{flow}
\end{lemma}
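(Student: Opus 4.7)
The plan is to establish a direct bijective correspondence (more precisely, a two-way translation) between $(G,k)$-gene labellings of $N$ and integer $s$-$w_p$ flows in $U$ of value $|\cG'|$. Two preliminary facts drive the proof. First, for any $G$-gene labelling $F$ of $N$ we have $F^*(v)\subseteq F(v)$ for every $v\in V$; in particular, if $F$ is a $(G,k)$-labelling then $|F^*(w_i)|\le k$, so the vertex capacities $k-|F^*(w_i)|$ are non-negative. Second, for a gene $g\notin\cG'$ the set $\{v:g\in F^*(v)\}$ already induces a rooted connected subgraph of $N$, so these genes require no further work; all the action happens for genes in $\cG'$, each of whose $F^*$-support splits into an upper component (above $w_p$) and a lower component (rooted at $w_p$), and the only way to rejoin them is to add $g$ to a suitable arc of the gall.

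Starting with the forward direction, suppose $F$ is a $(G,k)$-gene labelling of $N$. For each $g\in\cG'$ the extra cycle-vertices $\{w_i : g\in F(w_i)\setminus F^*(w_i)\}$ must, by property (III), together with the endpoints at $w_{i_1},w_{i_2},w_p$ form a connected subgraph in the gall bridging the two $F^*$-components; because the gall is a single cycle, these added vertices lie on one of the two arcs of the cycle from a neighbour of $w_{i_1}$ or $w_{i_2}$ down to $w_p$. Route one unit of flow for $g$ from $s$ through $s_g$ into whichever of $w_{i_1-1}$, $w_{i_2+1}$ lies on the chosen side, and then along the cycle to $w_p$. Summing over $g\in\cG'$ gives an integer flow of value $|\cG'|$; capacity at each $w_i$ is respected since the number of units passing through $w_i$ equals the number of $g\in\cG'$ added at $w_i$, which is at most $|F(w_i)|-|F^*(w_i)|\le k-|F^*(w_i)|$.

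For the converse (which also establishes the ``Moreover'' clause), let $\phi$ be an integer flow of value $|\cG'|$. Since $s$ is incident only with the edges $\{s,s_g\}$ and each $s_g$ has capacity $1$, a standard flow-decomposition argument expresses $\phi$ as $|\cG'|$ edge-disjoint unit flows, one through each $s_g$, each following a simple path $\pi_g$ that exits $s_g$ via $w_{i_1-1}$ or $w_{i_2+1}$ and proceeds along one side of the gall to $w_p$. Define $F$ by $F(v)=F^*(v)$ off the cycle and $F(w_i)=F^*(w_i)\cup\{g\in\cG':w_i\in\pi_g\}$ on the cycle. Property (I) holds because leaves lie off the cycle and $F^*$ already agrees with $G$ there; (II) holds because the flow through $w_i$ is at most $k-|F^*(w_i)|$, so $|F(w_i)|\le k$, and off the cycle $|F^*(v)|\le k$ will follow from the flow feasibility combined with the observation that $F^*$ is pointwise-minimal among $G$-gene labellings; (III) holds because for $g\notin\cG'$ it is already met by $F^*$, while for $g\in\cG'$ the path $\pi_g$ glues the upper and lower $F^*$-components together at $w_p$, yielding a rooted connected subgraph. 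The main obstacle is this last point: one must argue that gluing via $\pi_g$ produces a \emph{rooted} subgraph, not merely a connected one. This rests on the fact that the vertices $w_{i_1-1}$ and $w_{i_2+1}$, chosen just outside the $F^*$-support on the cycle, are exactly the entry points from which the directed arcs of that side of the gall all flow downward to $w_p$, so $\pi_g$ extends the upper component along a directed path into $w_p$, merging it with the lower component rooted there.
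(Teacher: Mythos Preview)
Your proof follows essentially the same route as the paper's: both directions hinge on the observation that a $(G,k)$-labelling must extend $F^*$ by choosing, for each $g\in\cG'$, one of the two arcs of the gall joining the upper $F^*$-component to $w_p$, and that the feasibility of these choices is exactly what the flow problem encodes.  The paper, like you, appeals to a path decomposition of the flow for the existence argument (``we assume that we know the path that each unit of flow takes''); the only substantive difference is in the constructive ``Moreover'' part.  You read the labelling off directly from the decomposition paths $\pi_g$, whereas the paper gives an iterative procedure that, for each $g\in\cG'$ in turn, tentatively fixes one side of the gall, reduces capacities accordingly, and re-solves the residual flow problem on $\cG'\setminus\{g\}$ to confirm the choice.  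Both approaches are correct and polynomial; yours is tidier, the paper's is more explicit about resolving which way each unit of flow travels once it enters the cycle.

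Two small corrections.  First, the unit path flows in your decomposition are not in general \emph{edge-disjoint}: several $\pi_g$ may share cycle edges, and what your argument actually uses (correctly) is only that the number of paths through each $w_i$ is bounded by its capacity $k-|F^*(w_i)|$.  Second, your justification of property~(II) for off-cycle vertices (``will follow from the flow feasibility combined with the observation that $F^*$ is pointwise-minimal'') is a non sequitur: the flow problem lives entirely on the gall and says nothing about $|F^*(v)|$ elsewhere, and minimality of $F^*$ alone does not force $|F^*(v)|\le k$.  The paper's own proof is equally silent on this point, so it is not a defect particular to your version.
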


\begin{proof}
First suppose that there exists such a flow $f$ with value $|\cG'|$.
Based on $f$, we show that there exists a $(G,k)$-labelling $F$ of $N$. For this existence proof, we assume that we know the path that each unit of flow takes. We will conclude the proof by showing how an actual $(G,k)$-labelling can be constructed.

Initially set $F=F^*$. Since $f$ has value $|\cG'|$ and each $s_g$ has capacity $1$, there is exactly one unit of flow passing through $s_g$ from $s$ to $w_p$. Furthermore, as $f$ is integer, it uses exactly one of the two edges $\{s_g,w_{i_1-1}\}$ and $\{s_g,w_{i_2+1}\}$. If $f$ uses $\{s_g,w_{i_1-1}\}$, then the corresponding unit of flow either uses  the vertices on the path from $w_{i_1-1}$ to $w_p$ through $\{w_1,w_p\}$ or the vertices on the path from $w_{i_1-1}$ to $w_p$ through $\{w_{p-1},w_p\}$. Depending on which of these paths this unit of flow takes, add $g$ to $F(w_i)$ for each of the vertices on this path. Similarly, if $f$ uses $\{s_g,w_{i_2+1}\}$, then the corresponding unit of flow either uses  the vertices on the path from $w_{i_2+1}$ to $w_p$ through $\{w_1,w_p\}$ or the vertices on the path from $w_{i_2+1}$ to $w_p$ through $\{w_{p-1},w_p\}$. Depending on which of these paths this unit of flow takes, add $g$ to $F(w_i)$ for each of the vertices on this path. Doing this for each $g\in \cG'$, we claim that the resulting map $F:V\rightarrow 2^{\cG}$ is a $(G,k)$-labelling of $N$. Clearly, $F$ satisfies (III). Furthermore, as each vertex $w_i$ has capacity $k-|F^*(w_i)|$, the cardinality of $F(w_i)$ is at most $k$. Thus $F$ satisfies (II). It now follows that $F$ is a $(G,k)$-labelling of $N$.

Now suppose that there exists a~$(G,k)$-gene labelling $F$ of $N$. By one of the two observations earlier, $F^*(v)\subseteq F(v)$ for all $v\in V$. Consider a gene $g\in \cG'$. The sub-digraph of $N$ induced by $\{v\in V : g\in F^*(v)\}$ consists of two rooted connected components. However, by (III), the sub-digraph of $N$ induced by $\{v\in V : g\in F(v)\}$ is rooted and connected. Therefore, there is a path on the cycle consisting of vertices $w_i$ with $g\in F(w_i)-F^*(w_i)$ that connects the two components. Sending one unit of flow from $s$ to the first vertex on this path via $s_g$, and then along this path to $w_p$ for each $g\in \cG'$ gives a desired integer flow.

We have now shown that if there is an integer flow $f$ from $s$ to $w_p$ with value $|\cG'|$, then there is a $(G,k)$-labelling of $N$. This does not directly give such a labelling as we can make no distinction on the flow units. In particular, it is not directly clear which of the two paths a flow unit takes once it reaches a vertex $w_i$ in the cycle. This can be rectified as follows. Let $f$ be such a flow and let $g\in \cG'$. Ignoring the vertices $w_{i_1},\ldots,w_{i_2}$, either the flow unit through $s_g$ takes the path from $w_{i_1-1}$ to $w_p$ via $w_1$ or the path from $w_{i_2+1}$ to $w_p$ via $w_{p-1}$. To make this decision, consider the following modification of the integer flow problem. Extend $F^*$ to $F^*_g$ by adding $g$ to each of $F^*(w_{i_1-1}),\ldots,F^*(w_1)$ and, for each of these vertices, subtract one from their capacities. If there is an integer flow from $s$ to $w_p$ in $U\ba s_g$ of $|\cG'|-1$ units, then we may assume that the unit of flow through $s_g$ in $U$ follows the path from $w_{i_1-1}$ to $w_p$ via $w_{p-1}$. In this case, replace $F^*$ with $F^*_g$ and $U$ with $U\ba s_g$, and repeat for another element in $\cG'-g$. If there is no such integer flow in $U\ba s_g$, then the unit of flow through $s_g$ in $U$ follows the path from $w_{i_2+1}$ to $w_p$ via $w_{p-1}$. In this second case, replace $F^*$ with that obtained by adding $g$ to each of $F^*(w_{i_1-1}),\ldots,F^*(w_1)$ and, for each of these vertices, subtract one from their capacities, and replace $U$ with $U\ba s_g$. Continuing in this way, we eventually obtain a $(G,k)$-labelling of $N$. \qed
\end{proof}

To illustrate Lemma~\ref{flow} and its proof, consider the example prior to the lemma, illustrated in Fig.~\ref{fig:example}. In $U$, a maximum flow could send either two units of flow through $w_1$ or one unit of flow through vertex $w_1$ and one unit of flow through vertex $w_4$. From the latter option, one can for example obtain the $(G,3)$-gene labelling shown in Fig.~\ref{fig:fopt}.

\begin{proposition}
Let $N$ be a phylogenetic network on $\cX$, let $\cG$ be a set of genes, let $G:\cX \rightarrow 2^\cG$ be a genome assignment, and let $k$ be a positive integer.
\begin{itemize}
\item[{\rm (i)}] If $N$ is a galled tree with exactly one gall, then there is a polynomial-time algorithm for deciding whether $N$ exhibits a $(G,k)$-labelling, in which case, such a labelling can also be found in polynomial time.

\item[{\rm (ii)}] If $T$ is a phylogenetic tree, then there is a polynomial-time algorithm for deciding whether $\ell(T,G,k)=1$.
\end{itemize}
\label{one}
\end{proposition}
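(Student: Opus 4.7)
For part~(i), the construction of $F^*$, the auxiliary graph $U$, and its capacities (all defined just before Lemma~\ref{flow}) can clearly be carried out in time polynomial in $|V|+|\cG|$: each gene $g\in\cG$ needs only a traversal of $N$ to identify the vertices satisfying (i) or (ii) in the definition of $F^*$. The algorithm then (a)~checks that $|F^*(v)|\le k$ for every vertex $v\in V$, and rejects otherwise (this is necessary since $F^*(v)\subseteq F(v)$ for any $G$-gene labelling $F$, and vertices off the cycle have no flexibility at all); (b)~computes $\cG'$ and assembles $U$ with the stated capacities; (c)~runs a standard polynomial-time integer max-flow algorithm from $s$ to $w_p$; and (d)~accepts exactly when the maximum flow value equals $|\cG'|$. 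Correctness is immediate from Lemma~\ref{flow}. To actually output a labelling, I would follow the iterative procedure at the end of the proof of Lemma~\ref{flow}: for each $g\in\cG'$ in turn, perform one auxiliary max-flow computation in $U\ba s_g$ to determine which of the two arcs of the cycle the flow through $s_g$ uses, commit to that choice by shrinking $F^*$ and the capacities, and continue.

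For part~(ii), note that $\ell(T,G,k)=1$ if and only if $T$ itself fails to admit a $(G,k)$-labelling but some network obtained from $T$ by adding a single arc does. The first condition can be tested in polynomial time by Proposition~\ref{zero}. For the second, I would enumerate all candidate single-arc additions: the tail and head of the new arc are each either an existing vertex of $T$ or a new vertex obtained by subdividing an arc of $T$ (the location of the subdivision point along an arc is irrelevant up to isomorphism, so it is enough to subdivide at most once per arc). This gives $O(|V|^2)$ candidates. For each candidate, I would check that the resulting digraph is acyclic and meets the phylogenetic network degree conditions, observe that it is a galled tree with exactly one gall (since adding a single edge to the underlying tree of $T$ creates exactly one undirected cycle), and invoke part~(i) to decide whether it admits a $(G,k)$-labelling. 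Accept if and only if some candidate succeeds.

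The main obstacle is really routine verification rather than a new idea: the heavy conceptual lifting for (i) is already done in Lemma~\ref{flow}, and for (ii) the key point is simply that every graph obtained from a tree by adding a single arc has a unique undirected cycle and hence qualifies as a galled tree with one gall, so part~(i) applies directly to every candidate. The one subtlety worth double-checking is the efficiency of the disambiguation loop inside part~(i): each of its $|\cG'|$ iterations performs a single max-flow computation on a monotonically shrinking auxiliary graph, yielding $O(|\cG|)$ max-flow calls overall, and hence a polynomial bound for the entire procedure. Combining this with the $O(|V|^2)$ outer enumeration in part~(ii) still leaves a polynomial algorithm.
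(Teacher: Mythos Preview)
Your proposal is correct and follows essentially the same approach as the paper: part~(i) invokes Lemma~\ref{flow} together with a polynomial max-flow routine, and part~(ii) enumerates the $O(n^2)$ single-arc additions and applies part~(i) to each. Your write-up is in fact more careful than the paper's own proof in two places: you explicitly check $|F^*(v)|\le k$ for vertices off the cycle (which is necessary but not mentioned in the paper's terse argument), and you explicitly rule out $\ell(T,G,k)=0$ via Proposition~\ref{zero} before concluding $\ell(T,G,k)=1$.
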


\begin{proof}
First note that a maximum-valued integer flow can be found in $O(n^{1.5}\log(n\cdot k))$ time~\cite{GoldbergRao1998}. Thus, (i) follows from Lemma~\ref{flow}. For (ii), if $|\cX|=n$, then there is $O(n^2)$ possible ways of adding a single arc to $T$. Applying Lemma~\ref{flow} to each such way gives the desired algorithm. This completes the proof of the proposition. \qed
\end{proof}

We now extend Proposition~\ref{one}(i) to all galled trees using induction on the number of galls.  Let $N$ be a galled tree on $\cX$, let $\cG$ be a set of genes, let $G:\cX \rightarrow 2^\cG$ be a genome assignment and let $k$ be a positive integer. If $N$ has either no galls or exactly one gall, then we have such an algorithm by Propositions~\ref{zero} and~\ref{one}, so we may assume that $N$ has at least two galls. In this case there exists a vertex $u_1$ of $N$ with the property that, for some gall, each of the vertices in the vertex set of this gall are descendants of $u_1$ and no vertex that is a proper descendant $u_1$ has this property.  Let $N_1$ be the phylogenetic network obtained from $N$ by replacing $u_1$ and all of its descendants with a single vertex $q_1$. Let $Q_1$ be the phylogenetic network obtained from $N$ by deleting all of the vertices of $N$ that are not descendants of $u_1$ and adjoining a parent vertex $r_1$ to $u_1$ with one further child other than $u_1$. Call the additional child vertex $v_1$. Let $L_{Q_1}$ denote the leaf set of $Q_1$. Effectively, we have partitioned $N$ into two phylogenetic networks $N_1$ and $Q_1$. See Figure~\ref{fig:galledtree} for an example.
Let
$$G(q_1)= G(v_1)=\big(\bigcup_{x\in L_{Q_1}-\{v_1\}} G(x)\big)\cap
\big(\bigcup_{x\in \cX-L_{Q_1}}G(x)\big).$$
The proof of the following lemma is straightforward, and so the details are omitted.

\begin{figure}[ht]\centering
\includegraphics[width=11cm]{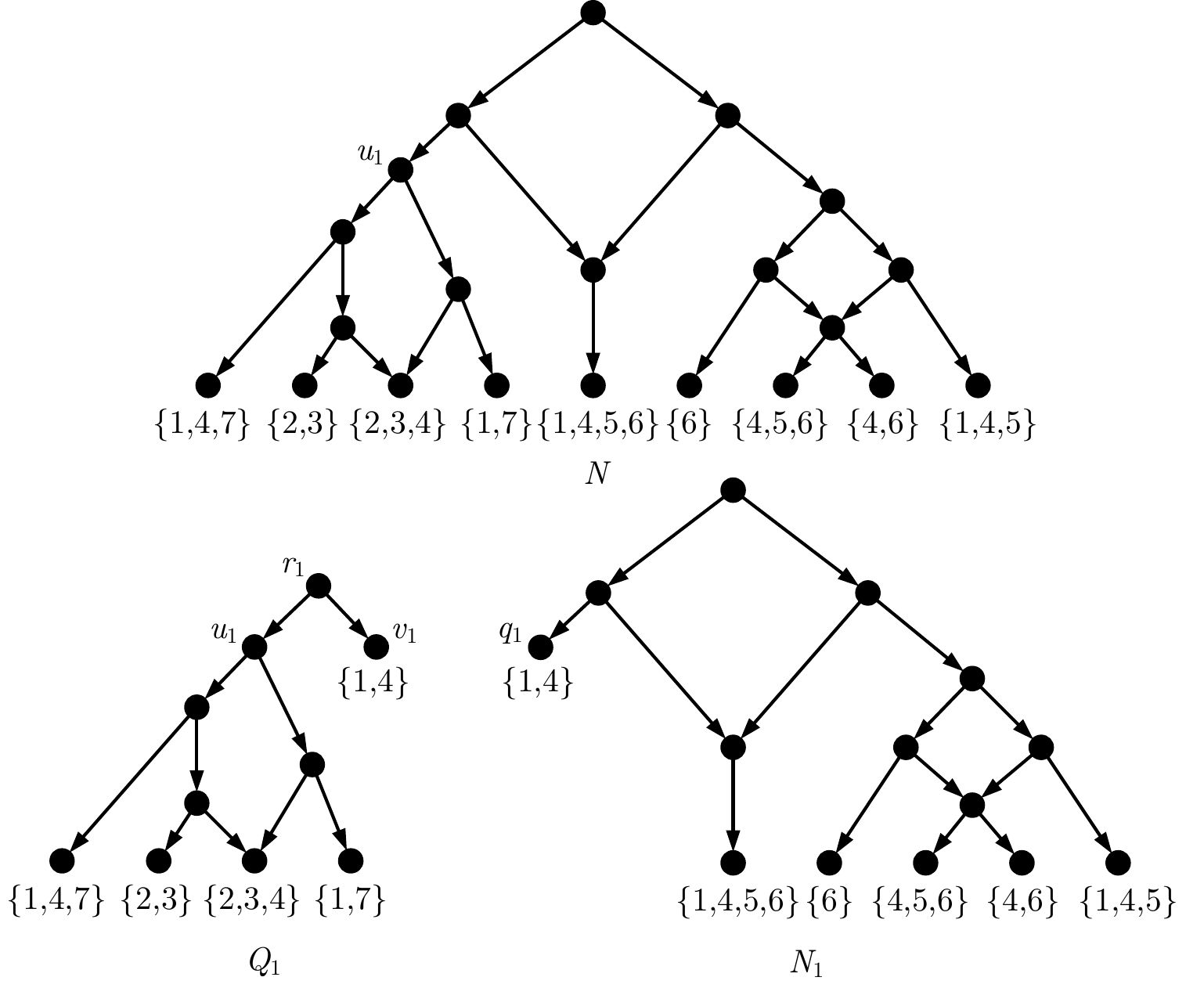}
\caption{A galled tree~$N$ and the decomposition of~$N$ into~$Q_1$ and~$N_1$ described in the text.}\label{fig:galledtree}
\end{figure}

\begin{lemma}
The galled tree $N$ has a $(G,k)$-labelling if and only if each of $N_1$ and $Q_1$ has a
$(G,k)$-labelling.
\label{repeat}
\end{lemma}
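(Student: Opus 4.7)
The plan is to verify both implications by constructing explicit labellings. The key enabling observation, which follows from the vertex-disjointness of galls together with the minimal choice of $u_1$, is that $u_1$ is a cut vertex in the underlying undirected graph of $N$: every undirected path between a descendant of $u_1$ and a non-descendant must pass through $u_1$. An immediate consequence is that whenever $F$ is a $(G,k)$-labelling of $N$ and $g\in G(q_1)$, property (III) applied to the two leaves witnessing $g\in G(q_1)$ forces $g\in F(u_1)$, so $G(q_1)\subseteq F(u_1)$.

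For the forward direction, given $F$ on $N$, I would set $F_1(v)=F(v)$ for $v\in V(N_1)\setminus\{q_1\}$ and $F_1(q_1)=G(q_1)$, and set $F_Q(v)=F(v)$ for every descendant $v$ of $u_1$ (including $u_1$ itself) together with $F_Q(v_1)=F_Q(r_1)=G(v_1)$. Properties (I) and (II) are immediate, where (II) uses $G(q_1)\subseteq F(u_1)$ to bound the new labels by $k$. For (III), the subgraph induced in $N_1$ (respectively $Q_1$) by any gene is obtained from the corresponding induced subgraph in $N$ by cutting at $u_1$, and the cut-vertex property guarantees that each piece remains rooted and connected after $u_1$ is renamed as $q_1$ on one side or augmented by $r_1, v_1$ on the other.

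For the reverse direction, given $F_1$ on $N_1$ and $F_Q$ on $Q_1$, define $F(v)=F_1(v)$ for $v\in V(N_1)\setminus\{q_1\}$ and $F(v)=F_Q(v)$ for every descendant $v$ of $u_1$; properties (I) and (II) hold at once. For (III), fix a gene $g\in\cG$. If $g\in G(q_1)$, then $g\in F_1(q_1)\cap F_Q(v_1)$, so rootedness in $Q_1$ forces $g\in F_Q(r_1)$ and, via the arc $(r_1,u_1)$, also $g\in F_Q(u_1)$; the `above-$u_1$' subgraph inherited from $F_1$ then glues to the `below-$u_1$' subgraph inherited from $F_Q$ at $u_1$ to produce a rooted, connected subgraph in $N$. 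If $g\notin G(q_1)$, the gene appears on only one side of $u_1$, so the induced subgraph in $N$ coincides with the (possibly empty) one carried over from $N_1$ or from $Q_1$ after deleting $r_1$ and $v_1$, and is already rooted and connected.

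The main obstacle is really just justifying the cut-vertex property at the start; once that is in hand, the remainder reduces to the routine bookkeeping sketched above, which is presumably why the authors omit the details.
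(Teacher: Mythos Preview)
Your argument is correct and is exactly the routine verification the authors omit as ``straightforward''; the cut-vertex property of $u_1$ (which indeed follows from minimality together with vertex-disjointness of galls, since minimality forces $u_1$ to be the top of its gall and hence to lie on no other gall) is the only substantive point, and once it is in hand the two directions are the bookkeeping you describe. One tiny wrinkle: in the reverse direction, when $g\notin G(q_1)$ you assert that $g$ ``appears on only one side of $u_1$'', but this refers to leaves, and nothing prevents an arbitrary $(G,k)$-labelling $F_1$ or $F_Q$ from placing $g$ at internal vertices on the ``wrong'' side; this is harmless, since one may first prune such extraneous genes (or equivalently restrict to the gene sets actually occurring at the leaves of $N_1$ and $Q_1$), but it is worth a word.
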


By Proposition~\ref{one}(i), there is a polynomial-time algorithm for deciding whether or not $Q_1$ has a $(G,k)$-labelling. If there is no such labelling, then, by Lemma~\ref{repeat}, $N$ has no $(G,k)$-labelling. On the other hand, if $Q_1$ has a $(G,k)$-labelling, then one needs to check if $N_1$ has a $(G,k)$-labelling. Now repeat the above construction with $N$ replaced by $N_1$. Continuing in this way, we either find a galled tree with a single gall that does not exhibit a $(G,k)$-labelling, and thereby show  that $N$ has no such labelling, or we find no such galled tree and conclude that $N$ has a
$(G,k)$-labelling. Note that the number of galls in $N$ is polynomial in the size of the vertex set of $N$.  In particular, we have established the following results.

\medskip
\noindent {\bf Theorem~\ref{prop:level1}}
{\em Let $N$ be a galled tree on $\cX$, let $\cG$ be a set of genes, let $G:\cX \rightarrow 2^\cG$ be a genome assignment, and let $k$ be a positive integer. Then there is a polynomial-time algorithm for deciding whether $N$ exhibits a $(G,k)$-gene labelling.}

\begin{corollary}
 \label{gallcor}
Let $T$ be a rooted phylogenetic tree on $\cX$, let $\cG$ be a set of genes, let $G:\cX \rightarrow 2^\cG$ be a genome assignment, and let $k$ be a positive integer. If $h$ is a fixed non-negative integer, then there is a polynomial-time algorithm for deciding whether or not there is a galled tree $N$ on $\cX$ that can be obtained from $T$ by adding at most $h$ arcs and which exhibits a $(G,k)$-gene labelling.
\end{corollary}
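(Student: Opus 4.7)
The plan is to exploit the fact that $h$ is a fixed constant and reduce to polynomially many instances of the galled-tree decision problem solved by Theorem~\ref{prop:level1}. Concretely, I would enumerate every way of adding at most $h$ arcs to $T$ and, for each resulting digraph, check in polynomial time whether it is a galled tree and, if so, apply Theorem~\ref{prop:level1} to decide whether it exhibits a $(G,k)$-gene labelling; return ``yes'' if any enumerated network passes both tests, and ``no'' otherwise.

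For the enumeration, observe that each endpoint of an added arc lies in some subdivision $T'$ of $T$, and is either an original vertex of $T$ or a subdivision vertex on an arc of $T$. Any subdivision vertex that is not used as an endpoint of an added arc may be suppressed without changing the resulting network, so I may assume that $T'$ carries at most $2h$ subdivision vertices in total, and therefore at most $2h$ subdivisions on any single arc of $T$. I would preprocess by subdividing each arc of $T$ with $2h$ fresh vertices to obtain an auxiliary tree $\hat T$, whose vertex set has size $O(h\cdot|V(T)|)$. For each $h'\in\{0,1,\ldots,h\}$ I would then enumerate the $O(|V(\hat T)|^{2h'})$ ordered $2h'$-tuples specifying the tail and head of each added arc as a vertex of $\hat T$; this yields $O(|V(T)|^{2h})$ configurations in total, polynomially many for fixed $h$.

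For each configuration, I would construct the candidate digraph $N$ by adjoining the chosen arcs to $\hat T$ and then suppressing every subdivision vertex of $\hat T$ not selected as an endpoint. Three polynomial-time checks then decide whether this configuration witnesses a ``yes'' instance: (a) $N$ is a valid phylogenetic network, verified by testing acyclicity via topological sorting together with the prescribed in- and out-degree conditions; (b) $N$ is a galled tree, verified by computing the biconnected components of the underlying undirected graph and confirming that every nontrivial block is a single undirected cycle; (c) $N$ exhibits a $(G,k)$-gene labelling, which is decidable in polynomial time by Theorem~\ref{prop:level1}. I would return ``yes'' as soon as some configuration passes all three, and ``no'' if the enumeration is exhausted.

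The main obstacle is bookkeeping rather than conceptual depth. Several added arcs may share endpoints or subdivide the same arc of $T$, and the vertical order of subdivision points along a given arc of $T$ influences both acyclicity and the galled-tree structure. Fixing once and for all the linear order of the $2h$ fresh subdivision vertices on each arc of $\hat T$ canonises this order, and one checks that every galled tree $N$ obtainable from $T$ by adding at most $h$ arcs corresponds, after suppression, to at least one enumerated configuration; conversely, every configuration accepted by the three tests yields such an $N$. The overall running time is $O(|V(T)|^{2h})$ calls to the polynomial-time algorithm of Theorem~\ref{prop:level1}, which is polynomial in the input size whenever $h$ is fixed.
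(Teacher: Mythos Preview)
Your proposal is correct and follows essentially the same approach as the paper: enumerate the polynomially many (for fixed $h$) ways of subdividing $T$ with at most $2h$ new vertices and inserting at most $h$ arcs, filter for galled trees, and invoke Theorem~\ref{prop:level1} on each candidate. The only cosmetic difference is that you pre-subdivide every arc of $T$ with $2h$ dummy vertices and later suppress the unused ones, whereas the paper phrases the enumeration as choosing which arcs to subdivide; both lead to the same $|V(T)|^{O(h)}$ bound and the same correctness argument.
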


\begin{proof}
Suppose that~$N$ is a galled tree on~$\cX$ that can be obtained from~$T$ by adding at most~$h$ arcs. Then there is an embedding~$T'$ of~$T$ in~$N$. Notice that since~$N$ is a galled tree, it follows that all vertices of~$N$ are contained in~$T$ and thus that~$N$ can be obtained from~$T$ by subdividing at most~$2h$ arcs and adding at most~$h$ arcs.

Hence, given~$T$, we can try each possible way of subdividing at most~$2h$ arcs and adding at most~$h$ arcs. For each such possibility, we check if the resulting network is a galled tree.  In each such case we can check if a~$(G,k)$-gene labelling of this network exists, by Theorem~\ref{prop:level1}. The time needed is polynomial in the size of the input, for each fixed~$h$. \qed
\end{proof}

\section{CONCLUDING COMMENTS}

The analysis of this paper rests on a number of assumptions concerning gene evolution. Perhaps the most restrictive is the requirement that gene genesis is a unique event.  This requirement reflects the fact that a gene is typically a long and fairly precise sequence of nucleotides, and the probability that a similar sequence could evolve independently in a different part of the tree is small. This seems reasonable if DNA sequence evolution is described by a neutral model \cite{kim},  but, in some cases, natural selection will, no doubt, direct the evolution of DNA sequences towards certain genes that confer higher fitness. Thus,  simple arguments based on neutrality need to be treated with caution. It would be interesting to extend the analysis of this paper to allow for a small frequency of independent gene genesis events.

A related question is what degree of sequence similarity is required in order to classify two sequences as coding for the same gene.  Insisting on exact sequence identity is too severe, since it is well known that different species typically  encode a gene with slightly different sequences that result from random site substitutions (indeed these differences have been the main signal used for phylogenetic tree reconstruction \cite{fels}).  This question of gene identity is also relevant to the probability of independent gene genesis:  a region of DNA that codes for a gene could, in principle, accumulate sufficient site mutations to put it just outside the range of being identified with that gene, but could  then mutate back within range, giving the appearance of a second gene genesis event.

Other aspects of the model that may be criticized are the assumptions that the species tree is known with certainty (or, indeed, that it is meaningful to talk of a `species tree' \cite{doo07}), and that the model does not penalize gene losses at all.

Our computational complexity results  highlight that many problems are surprisingly difficult, even for a tree,  and  some questions still remain to be explored further. One that seems particularly interesting is
described as follows, along with our conjecture as to its possible resolution.

Given a rooted phylogenetic tree~$T$, a set of genes~$G(x)$ for each leaf~$x$ of~$T$, and natural numbers~$k$ and~$h$, consider the problem of deciding whether it is possible to add at most~$h$ arcs to~$T$ to obtain a phylogenetic network~$N$ that admits a~$(G,k)$-gene labelling.

\begin{conjecture} \label{con:networkfromtree}
This problem is NP-hard in general, but for each fixed $h$, it admits a polynomial-time algorithm.
\end{conjecture}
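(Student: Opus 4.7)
My plan has two independent halves: NP-hardness when $h$ and $k$ are both part of the input, and a polynomial-time algorithm for every fixed $h$. For the hardness direction, I would reduce from \textsc{Gene Labelling} (Theorem~\ref{prop:homeomorphism}), which is already NP-complete for $k=1$. Given an instance $(N,G,1)$ of \textsc{Gene Labelling} whose network $N$ has $r$ reticulation vertices, form a base tree $T$ by deleting, at each reticulation, one of the two incoming arcs (suppressing resulting degree-two vertices) and set $h:=r$. Re-adding the $r$ deleted arcs recovers $N$, giving the easy direction. For the converse, I would augment the genome assignment with \textit{marker gadgets}: for each deleted arc $(p,q)$, a pair of small subtrees hung under $p$ and $q$ carrying a unique shared marker gene, engineered so that any network obtainable from $T$ by adding at most $h$ arcs can satisfy the marker's connectivity requirement only by inserting an arc that effectively plays the role of $(p,q)$. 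Calibrating the gadgets and the capacity bound $k'$ so that the budget of $h$ arcs cannot be spent elsewhere while leaving a valid $(G',k')$-labelling is the delicate step of the reduction.

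For the algorithm when $h$ is fixed, I would enumerate all candidate augmentations of $T$ and solve \textsc{Gene Labelling} on each candidate network separately. An added arc is specified by an ordered pair of locations, each being either a vertex of $T$ or a subdivision point on an existing or a previously added arc, so the number of candidate networks is bounded by $|V(T)|^{O(h)}$, which is polynomial for every fixed $h$. Each candidate network $N'$ has at most $h$ reticulation vertices and hence cyclomatic number at most $h$ and treewidth at most $h+1$. To decide $(G,k)$-labellability of $N'$, I would generalise Lemma~\ref{flow}: compute the canonical partial labelling $F^*$ whose components per gene are forced by pairs of leaves; and then, for each gene $g$ whose induced sub-digraph is disconnected, choose a connecting route through the reticulation core subject to the per-vertex residual capacity $k-|F^*(v)|$, via a multi-commodity flow problem on the bounded-size core.

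The principal obstacle is extending Lemma~\ref{flow} from a single gall to $h$ possibly nested or overlapping reticulations. In a galled tree the cycles are vertex-disjoint, so each disconnected gene is routed independently within one cycle by a single flow instance. When reticulations share ancestral paths, distinct genes compete for capacity along common segments, the number of candidate routings per gene can grow to $2^{O(h)}$, and the rootedness condition on each gene's induced sub-digraph couples decisions at several reticulations. Resolving this, perhaps by dynamic programming over a width-$O(h)$ tree decomposition of $N'$ combined with a flow or integer-programming subproblem of bounded dimension at each bag (so as to handle general $k$ without enumerating gene subsets explicitly), is, in my view, where the real technical difficulty of the conjecture lies.
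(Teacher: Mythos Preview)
The statement you are attempting to prove is Conjecture~\ref{con:networkfromtree}, and the paper does \emph{not} prove it: it is explicitly posed as an open problem in the concluding section, with no accompanying argument. There is therefore no ``paper's own proof'' to compare your attempt against.

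What you have written is not a proof but a research outline, and you say so yourself. Both halves contain acknowledged gaps that are, in fact, the whole content of the conjecture. On the hardness side, the marker-gadget construction is the entire reduction, and you leave it at ``calibrating the gadgets \ldots\ is the delicate step''; in particular, you do not explain how to keep $k'$ small enough that the original $(G,1)$-labelling structure is preserved while simultaneously forcing each of the $h$ added arcs to land in a prescribed location. On the algorithmic side, the enumeration of candidate networks is fine and mirrors the argument behind Corollary~\ref{gallcor}, but the hard part is deciding \textsc{Gene Labelling} on a network with $h$ reticulations. Bounded treewidth of $N'$ alone does not obviously suffice: the number of genes $|\mathcal{G}|$ is unbounded, and a na\"{i}ve dynamic program over a tree decomposition would carry, at each bag, information about which genes are currently present---an exponential-in-$|\mathcal{G}|$ state space. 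Your final paragraph correctly identifies this coupling as ``where the real technical difficulty of the conjecture lies'', which is another way of saying that the conjecture remains open after your proposal.
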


\end{document}